\newcommand{\Id}{\mathtt{I}}
\newcommand{\rank}[1]{\ensuremath{\mathrm{rank}\left(#1\right)}}
\newcommand{\norm}[1]{\ensuremath{\left\| #1 \right\|}}
\newtheorem{problem}{Problem}
\newtheorem{defn}{Definition}
\begin{document}

\title{Recovery of Coherent Data via Low-Rank Dictionary Pursuit}

\author{\name Guangcan Liu \email guangcan.liu@rutgers.edu\\
       \addr Department of Statistical Science,  Cornell University, Ithaca, NY 14853, USA\\
       Department of Statistics and Biostatistics, Rutgers University, Piscataway, NJ 08854, USA
       \AND
       \name Ping Li \email pingli@stat.rutgers.edu\\
       \addr Department of Statistics and Biostatistics, Department of Computer Science,\\
             Rutgers University, Piscataway, NJ 08854, USA}

\editor{***}

\maketitle
\begin{abstract}
The recently established RPCA~\citep{Candes:2009:JournalACM} method provides us a convenient way to restore low-rank matrices from grossly corrupted observations. While elegant in theory and powerful in reality, RPCA may be not an ultimate solution to the low-rank matrix recovery problem. Indeed, its performance may not be perfect even when the data is strictly low-rank. This is because RPCA prefers incoherent data, which, however, could be inconsistent with some natural structures of data. As a typical example, consider the clustering structure which is ubiquitous in modern applications. As the number of cluster grows, the coherence parameters of data keep increasing, and accordingly, the recovery performance of RPCA degrades. We show that it is possible for Low-Rank Representation (LRR)~\citep{tpami_2013_lrr} to overcome the challenges raised by coherent data, as long as the dictionary in LRR is configured appropriately. Namely, we mathematically prove that if the dictionary itself is low-rank then LRR can avoid the coherence parameters which have potential to be large. This provides an elementary principle for dealing with coherent data and naturally leads to a practical algorithm for obtaining proper dictionaries in unsupervised environments. Our extensive experiments on randomly generated matrices and real motion sequences show promising results.
\end{abstract}
\section{Introduction}
Nowadays our data is often high-dimensional, massive and full of gross errors (e.g., corruptions, outliers and missing measurements). In the presence of gross errors, the classical Principal Component Analysis (PCA) method, which is probably the most widely used tool for data analysis and dimensionality reduction, becomes brittle --- A single gross error could render the estimate produced by PCA arbitrarily far from the desired estimate. As a consequence, it is crucial to develop new statistical tools for robustifying PCA. A variety of methods have been proposed and explored in the literature over several decades, e.g.,~\citep{CandesPIEEE,Candes:2009:math,Candes:2009:JournalACM,cacm_1981_ransac,robust:1972,gross:2011:tit,ke:cvpr:2005,torre:ijcv:2003,xu:2013:tit,tpami_2013_lrr,rahul:jlmr:2010,robust_spc,xu:2010:nips}. One of the most exciting methods is probably the so-called RPCA (Robust Principal Component Analysis) method by~\citet{Candes:2009:JournalACM}, built upon the exploration of the following low-rank matrix recovery problem:
\begin{problem}[Low-Rank Matrix Recovery]\label{pb:lmr}
Suppose we have a data matrix $X\in\mathbb{R}^{m\times{}n}$ and we know it can be decomposed as
\begin{eqnarray}\label{eq:recover}
X = L_0 + S_0,
\end{eqnarray}
where $L_0\in\mathbb{R}^{m\times{}n}$ is a low-rank matrix in which each column is a data point drawn from some low-dimensional subspace, and $S_0\in\mathbb{R}^{m\times{}n}$ is a sparse matrix supported on $\Omega\subseteq{}\{1,\cdots,m\}\times\{1,\cdots,n\}$. Except these mild restrictions, both components are arbitrary. The rank of $L_0$ is unknown, the support set $\Omega$ (i.e., the locations of the nonzero entries of $S_0$) and its cardinality (i.e., the amount of the nonzero entries of $S_0$) are unknown either. In particular, the magnitudes of the nonzero entries in $S_0$ may be arbitrarily large. Given $X$, can we recover both $L_0$ and $S_0$, in a scalable and exact fashion?
\end{problem}

The theory of RPCA tells us that, very generally, when the low-rank matrix $L_0$ satisfies some \emph{incoherent conditions} (i.e., the coherence parameters of $L_0$ are small), both the low-rank and the sparse matrices can be exactly recovered by using the following convex, potentially scalable program:
\begin{eqnarray}\label{eq:rpca}
\min_{L,S}\|L\|_* + \lambda \|S\|_1, &\textrm{s.t.}&X = L+S,
\end{eqnarray}
where $\|\cdot\|_*$ is the nuclear norm~\citep{phd_2002_nuclear} of a matrix, $\|\cdot\|_1$ denotes the $\ell_1$ norm of a matrix seen as a long vector, and $\lambda>0$ is a parameter. Besides of its elegance in theory, RPCA also has good empirical performance in many practical areas, e.g., image processing~\citep{zhang:2012:ijcv}, computer vision~\citep{peng:pami:2012}, radar imaging~\citep{sar:2012}, magnetic resonance imaging~\citep{mri:2012}, etc.

While complete in theory and powerful in reality, RPCA cannot be an ultimate solution to the low-rank matrix recovery Problem~\ref{pb:lmr}. Indeed, the method might not produce perfect recovery even when the latent matrix $L_0$ is strictly low-rank. This is because, seen from the aspect of mathematics, RPCA requires $L_0$ to satisfy some incoherent conditions, which, however, might not hold in reality. In a physical sense, the reason is that RPCA captures only the low-rankness property, which should not be the only property of our data, but essentially ignores the \emph{extra structures} (beyond low-rankness) widely existed in data: Given the situation that $L_0$ is low-rank, i.e., the column vectors of $L_0$ locate on a low-dimensional subspace, it is quite normal that $L_0$ may have some extra structures, which specify in more detail \emph{how} the data points (i.e., the column vectors of $L_0$) locate on the subspace.

Figure~\ref{fig:cluster} demonstrates a typical example of extra structures; that is, the clustering structure which is ubiquitous in modern applications~\citep{ijcv_1998_factor,cvpr_2009_ssc,robust_spc}. Whenever the data is exhibiting some clustering structure, the coherence parameters might be large and therefore RPCA might be unsatisfactory. More precisely, as will be shown in this paper, while the rank of $L_0$ is fixed and the underlying cluster number goes large, the coherence of $L_0$ keeps heightening and thus, arguably, the performance of RPCA drops.

\newpage

\begin{figure}[h!]
\begin{center}
\includegraphics[width=0.85\textwidth]{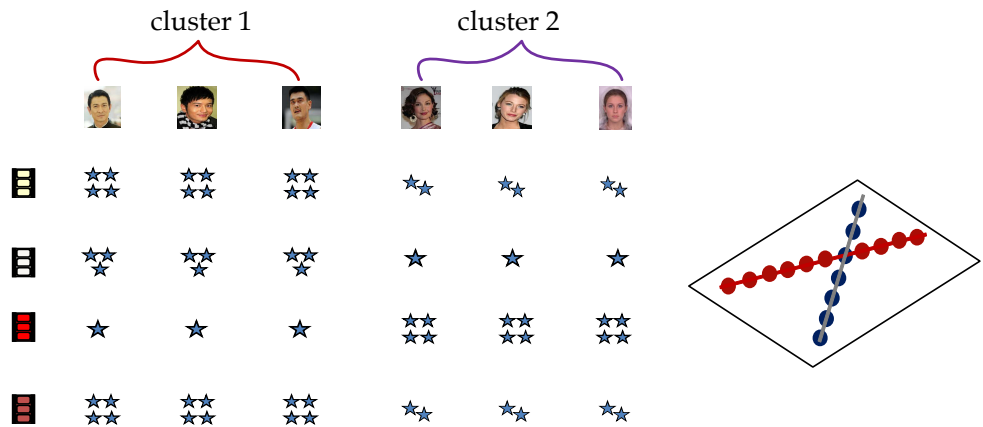}
\caption{Exemplifying the extra structures of low-rank data. Consider the Netflix data where each entry of the data matrix is a grade that a user assigns to a movie. Usually, such data is low-rank, as there exist wide correlations among the grades that different users assign to the same movie. Also, such data could own some clustering structure, since the preferences of the same kind of users are more similar to each other than to those with different gender, personality, culture and education background. In summary, such data (1) is usually low-rank and (2) often exhibits some clustering structure.}\label{fig:cluster}
\end{center}\vspace{-0.2in}
\end{figure}

To well handle \emph{coherent data}\footnote{Generally, coherent (resp. incoherent) data refers to the matrices whose coherence parameters are relatively large (resp. small). Yet there is no deterministic threshold to divide all matrices into coherent matrices and incoherent ones. To avoid confusion, in this paper we say that a matrix is incoherent if and only if the column vectors of the matrix are sampled from a single subspace uniformly at random. Apart from this particular case, the matrix is said to be coherent. In that sense, strictly speaking, the ``incoherent data'' stated in this paper does not exist in realistic environments.}, a straightforward approach is to \emph{avoid} the coherence parameters of $L_0$. Nevertheless, as explained in~\citep{Candes:2009:JournalACM,Candes:2009:math}, the coherence parameters are indeed \emph{necessary} for matrix recovery (if there is no additional condition available). Even more, this paper shall further indicate that the coherence parameters are related in nature to some extra structures intrinsically existed in $L_0$ and therefore \emph{cannot} be discarded simply. Interestingly, we show that it is possible to \emph{avoid} the coherence parameters by imposing some \emph{additional conditions}, which are easy to obey in supervised environments and can also be approximately satisfied in unsupervised environments. Our study is based on the following convex program termed Low-Rank Representation (LRR)~\citep{tpami_2013_lrr}:
\begin{eqnarray}\label{eq:lrr}
\min_{Z,S}\|Z\|_* + \lambda \|S\|_1, &\textrm{s.t.}&X = AZ+S,
\end{eqnarray}
where $A\in\mathbb{R}^{m\times{}d}$ is a size-$d$ dictionary matrix constructed in advance\footnote{Note that it is unimportant to determine the value of $d$. Suppose $Z^*$ is the optimal solution with respect to $Z$. Then LRR uses $AZ^*$ to restore $L_0$. It is easy to see that LRR falls back to RPCA when $A=\Id$ (identity matrix), and it can actually be further proved that the recovery produced by LRR is the same as RPCA whenever the dictionary $A$ is orthogonal.}, and $\lambda>0$ is a parameter. In order for LRR to avoid the coherence parameters which have potential to be large in the presence of extra structures, we prove that it is sufficient to construct in advance a dictionary matrix $A$ which is low-rank by itself. This additional condition (i.e., the dictionary $A$ is low-rank) gives a generic prescription to defend the possible infections raised by coherent data, providing an elementary criterion for learning the dictionary matrix $A$. Subsequently, we propose a simple and effective algorithm that utilizes the output of RPCA to construct the dictionary in LRR. Our extensive experiments demonstrated on randomly generated matrices and motion data show promising results. In summary, the contributions of this paper include:
\begin{itemize}
\item[$\diamond$] For the first time, this paper studies the problem of recovering low-rank, but coherent matrices from their corrupted versions. We investigate the physical regime where coherent data arises --- The widely existed clustering structure is a typical example that leads to coherent data. We prove some basic theories for resolving the problem of recovering coherent data, and also establish a practical algorithm that works better than RPCA in our experiments.
\item[$\diamond$] The studies of this paper help to understand the \emph{physical} meaning of coherence, which is now standard and widely used in various literatures, e.g.,~\citep{CandesPIEEE,Candes:2009:math,Candes:2009:JournalACM,xu:2010:nips,jmlr_2012_lrr}. We show that the coherence parameters are not ``assumptions'' for accomplishing a proof, but rather some excellent quantities that relate in nature to the \emph{extra structures} (beyond low-rankness) intrinsically existed in $L_0$.
\item[$\diamond$] This paper provides insights regarding the LRR model proposed by~\citep{tpami_2013_lrr}. While the special case of $A=X$ has been extensively studied, the LRR model \eqref{eq:lrr} with general dictionaries was not fully understood. We show that LRR \eqref{eq:lrr} equipped with proper dictionaries could well handle coherent data.
\item[$\diamond$] The idea of replacing $L$ with $AZ$ is essentially related to the spirit of matrix factorization which has been explored for long, e.g.,~\citep{Srebro05generalizationerror,nips:WeimerKLS07}. In that sense, the explorations of this paper help to understand why factorization techniques are useful.
\end{itemize}

The remainder of this paper is organized as follows. Section~\ref{sec:notation} summarizes mathematical notations used throughout this paper. In Section~\ref{sec:principle}, we explore the problem of recovering coherent data from corrupted observations, providing some theories and an algorithm for resolving the problem. Section~\ref{sec:proof} presents the complete proof procedure of our main result. Section~\ref{sec:exp} demonstrates experimental results and Section \ref{sec:con} concludes this paper.
\section{Summary of Main Notations}\label{sec:notation}
Capital letters such as $M$ are used to represent matrices, and accordingly, $[M]_{ij}$ denotes its $(i,j)$th entry. Letters $U$, $V$, $\Omega$ and their variants (complements, subscripts, etc.) are reserved for left singular vectors, right singular vectors and support set, respectively. We slightly abuse the notation $U$ (resp. $V$) to denote the linear space spanned by the columns of $U$ (resp. $V$), i.e., the column space (resp. row space). The projection onto the column space $U$, is denoted by $\mathcal{P}_U$ and given by $\mathcal{P}_U(M)=UU^TM$, and similarly for the row space $\mathcal{P}_V(M)=MVV^T$. We  also abuse the notation $\Omega$ to denote the linear space of matrices supported on $\Omega$. Then $\mathcal{P}_{\Omega}$ and $\mathcal{P}_{\Omega^{\bot}}$ respectively denote the projections onto $\Omega$ and $\Omega^c$ such that $\mathcal{P}_{\Omega}+\mathcal{P}_{\Omega^{\bot}}=\mathcal{I}$, where $\mathcal{I}$ is the identity operator. The symbol $(\cdot)^+$  denotes the Moore-Penrose pseudoinverse of a matrix: $M^+=V_M\Sigma_M^{-1}U_M^T$ for a matrix $M$ with Singular Value Decomposition (SVD)\footnote{In this paper, SVD always refers to skinny SVD. For a rank-$r$ matrix $M\in\mathbb{R}^{m\times{}n}$, its SVD is of the form $U_M\Sigma_MV_M^T$, with $U_M\in\mathbb{R}^{m\times{}r},\Sigma_M\in\mathbb{R}^{r\times{}r}$ and $V_M\in\mathbb{R}^{n\times{}r}$.} $U_M\Sigma_MV_M^T$.

Six different matrix norms are used in this paper. The first three norms are functions of the singular values: 1) The operator norm (i.e., the largest singular value) denoted by $\|M\|$, 2) the Frobenius norm (i.e., square root of the sum of squared singular values) denoted by $\|M\|_F$, and 3) the nuclear norm (i.e., the sum of singular values) denoted by $\|M\|_*$. The other three  are the $\ell_1$, $\ell_{\infty}$ (i.e., sup-norm) and $\ell_{2,\infty}$ norms of a matrix:  $\norm{M}_{1}=\sum_{i,j}|[M]_{ij}|$, $\norm{M}_{\infty}=\max_{i,j}\{|[M]_{ij}|\}$ and $\norm{M}_{2,\infty}=\max_{j}\{\sqrt{\sum_{i}[M]_{ij}^2}\}$.

The Greek letter $\mu$ and its variants (e.g., subscripts and superscripts) are reserved to denote the coherence parameters of a matrix. We shall also reserve two lower case letters, $m$ and $n$, to respectively denote the data dimension and the number of data points, and we use the following two symbols throughout this paper:
\begin{eqnarray*}
n_1 = \max(m,n) &\textrm{and}&n_2 = \min(m,n).
\end{eqnarray*}
A complete list of notations can be found in Appendix~\ref{sec:app:notations} for convenience of readers.
\section{On the Recovery of Coherent Data}\label{sec:principle}
In this section, we shall firstly investigate the physical regime that raises coherent data, and then discuss the problem of recovering coherent data from corrupted observations, providing some basic principles and an algorithm for resolving the problem.
\subsection{Coherence Parameters and Their Properties}
Notice that the rank function cannot fully capture all characteristics of $L_0$, and thus it is indeed necessary to define some quantities for measuring the effects of various extra structures (beyond low-rankness) such as the clustering structure demonstrated in Figure~\ref{fig:cluster}. The \emph{coherence} parameters defined in~\citep{Candes:2009:math,Candes:2009:JournalACM} are excellent exemplars of such quantities.
\subsubsection{$\mu_1$ and $\mu_2$}
For an $m\times{}n$ matrix $L_0$ with rank $r_0$ and SVD $L_0=U_0\Sigma_0V_0^T$, some of its important properties can be characterized by two coherence parameters, denoted as $\mu_1$ and $\mu_2$. The first coherence parameter, $1\leq\mu_1\leq{}m$, which characterizes the column space identified by $U_0$, is defined as
\begin{eqnarray}\label{eq:u1}
\mu_1(L_0) = \frac{m}{r_0}\max_{1\leq i\leq m}\|U_0^Te_i\|_2^2,
\end{eqnarray}
where $e_i$ denotes the $i$th standard basis. The second coherence parameter, $1\leq\mu_2\leq{}n$, which characterizes the row space identified by $V_0$, is defined as
\begin{eqnarray}\label{eq:u2}
\mu_2(L_0) = \frac{n}{r_0}\max_{1\leq j\leq n}\|V_0^Te_j\|_2^2.
\end{eqnarray}
In~\citep{Candes:2009:JournalACM}, another coherence parameter, called as the third coherence parameter and denoted as $1\leq\mu_3\leq{}mn$, is also introduced:
\begin{eqnarray*}
\mu_3(L_0) =  \frac{mn}{r_0}(\|U_0V_0^T\|_{\infty})^2=\frac{mn}{r_0}\max_{i,j}(|\langle{}U_0^Te_i,V_0^Te_j\rangle|)^2.
\end{eqnarray*}
Notice that $\mu_3$ is not indispensable, as it is actually a ``derivative'' of $\mu_1$ and $\mu_2$: Simple calculations give that $\mu_3\leq{}r_0\mu_1\mu_2$. The analysis of work does not need to access $\mu_3$. We include it just for the sake of consistence with~\citep{Candes:2009:JournalACM}.

The analysis in~\citep{Candes:2009:JournalACM} merges the above three parameters into a single one: $\mu(L_0)=\max\{\mu_1(L_0),\mu_2(L_0),\mu_3(L_0)\}$. As will be seen later, the behaviors of those three coherence parameters are different from each other, and thus it is indeed more adequate to consider them individually.
\subsubsection{$\mu_2$-phenomenon}\label{sec:u2}
\citet{Candes:2009:JournalACM} have proven that the success condition (regarding $L_0$) of RPCA is
\begin{eqnarray}\label{rpca:cond}
\rank{L_0}\leq{}\frac{n_2}{c_r\mu(L_0)(\log{n_1})^2},
\end{eqnarray}
where $\mu(L_0)=\max\{\mu_1(L_0),\mu_2(L_0),\mu_3(L_0)\}$ and $c_r>1$ is some numerical constant. So, RPCA will be less successful when the coherence parameters are considerably larger: The success condition \eqref{rpca:cond} is narrowed when $\mu(L_0)$ goes large. As an extreme example, consider the case where the latent matrix $L_0$ is one in only one column and zero everywhere else. Such a matrix produces $\mu_2(L_0)=n\geq{}n_2$, and thus the success condition \eqref{rpca:cond} is invalid. In this subsection, we shall further show that the widely existed clustering structure can enlarge the coherence parameters and, accordingly, degrades the performance of RPCA.
\begin{figure}[h!]
\begin{center}
\includegraphics[width=0.95\textwidth]{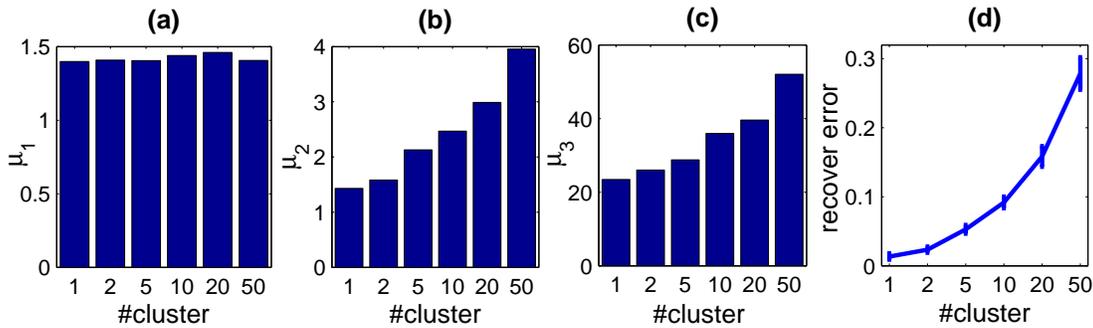}\vspace{-0.1in}
\caption{Exploring the influence of the cluster number, using randomly generated matrices. The size and rank of $L_0$ are fixed to be $500\times500$ and 100, respectively. The underlying cluster number is varying from 1 to 50. $S_0$ is fixed as a sparse matrix with 13\% nonzero entries. (a) The first coherence parameter $\mu_1(L_0)$ vs cluster number. (b) $\mu_2(L_0)$ vs cluster number. (c) $\mu_3(L_0)$ vs cluster number. (d) Recover error (produced by RPCA) vs cluster number. The numbers shown in above figures are averaged from 100 random trials. The recover error is computed as $\|\hat{L}_0-L_0\|_F/\|L_0\|_F$, where $\hat{L}_0$ denotes an estimate of $L_0$.}\label{fig:ic}\vspace{-0.2in}
\end{center}
\end{figure}

Given the situation that $L_0$ is low-rank, i.e., $\rank{L_0}\equiv{}r_0\ll{}n_2$, the data points (i.e., column vectors of $L_0$) should be sampled from a $r_0$-dimensional subspace. Yet the sampling is unnecessary to be \emph{uniform}. Indeed, a more realistic interpretation is to consider the data points as samples from the union of $k$ number of subspaces (i.e., clusters), and the sum of those multiple subspaces together has a dimension $r_0$. That is to say, there are multiple ``small'' subspaces inside one $r_0$-dimensional ``large'' subspace, as exemplified in Figure~\ref{fig:cluster}. It is arguable that such a structure of multiple subspaces exists widely in various domains, e.g., face, texture and motion~\citep{ijcv_1998_factor,cvpr_2009_ssc,Liu:2010:HGM}. Whenever the low-rank matrix $L_0$ is exhibiting such clustering behaviors, the second coherence parameter $\mu_2(L_0)$ will increase with the cluster number underlying $L_0$, as shown in Figure~\ref{fig:ic}. When the coherence is heightening, \eqref{rpca:cond} suggests that the performance of RPCA will drop, as verified in Figure~\ref{fig:ic}(d). For the ease of citation, we call the phenomena shown in Figure~\ref{fig:ic}(b)$\sim$(d) as the ``$\mu_2$-phenomenon''.

To see why the second coherence parameter increases with the cluster number underlying $L_0$, please refer to Appendix~\ref{app:why}. As can be seen from Figure~\ref{fig:ic}(a), the first coherence parameter $\mu_1$ is \emph{invariant} to the variation of the clustering number. This is because the behaviors of the data points (i.e., column vectors) can only affect the row space, while $\mu_1$ is defined on the column space. Nevertheless, if the row vectors of $L_0$ also own some clustering structure, $\mu_1$ could be large as well. This kind of data exists widely in text documents and we leave it as future work.
\subsection{Avoiding $\mu_2$ by LRR}
To accurately recover coherent matrices from their corrupted versions, one may establish some parametric models to \emph{capture} the extra structures which produce high coherence. However, it is usually hard, if not impossible, to know in advance what kind of extra structures there are and which models are appropriate to use. Even if the modalities of the extra structure are known, e.g., the mixture of multiple subspaces shown in Figure~\ref{fig:cluster}, such a strategy still needs to face some difficult problems, e.g., the estimate of the cluster number. In sharp contrast, it is much simpler to devise an approach that can \emph{avoid} the second coherence parameter $\mu_2$. Unfortunately, as explained in~\citep{Candes:2009:math,Candes:2009:JournalACM,jmlr_2012_lrr}, the coherence parameters are \emph{necessary} for identifying accurately the success conditions of matrix recovery. Even more, the $\mu_2$-phenomenon actually implies that $\mu_2$ is related in nature to some intrinsic structures of $L_0$ and thus cannot be eschewed freely. Interestingly, we shall show that LRR can avoid $\mu_2$ by using some \emph{additional conditions}, which are possible to obey in both supervised and unsupervised environments.\\

\noindent\textbf{Main Result:} We shall show that, when the dictionary matrix $A$ itself is low-rank, the recovery performance of LRR does not depend on $\mu_2$. Our main result is presented in the following theorem (The detailed proof procedure is deferred until Section~\ref{sec:proof}).
\begin{theorem}[Noiseless]\label{thm:noiseless}
Let $A\in\mathbb{R}^{m\times{}d}$ with SVD $A=U_A\Sigma_AV_A^T$ be a column-wisely unit-normed (i.e., $\|Ae_i\|_2=1,\forall{}i$) dictionary matrix which satisfies $\mathcal{P}_{U_A}(U_0)=U_0$ (i.e., $U_0$ is a subspace of $U_A$). For any $0<\epsilon<0.5$ and some numerical constant $c_a>1$, if
\begin{eqnarray}\label{eq:succ}
\rank{L_0}\leq\rank{A}\leq\frac{\epsilon^2n_2}{c_a\mu_1(A)\log{}n_1}&\textrm{and}&|\Omega|\leq(0.5-\epsilon)mn,
\end{eqnarray}
then with probability at least $1-n_1^{-10}$, the optimal solution to the LRR problem \eqref{eq:lrr} with $\lambda=1/\sqrt{n_1}$ is unique and exact, in a sense that
\begin{eqnarray*}
Z^* = A^+L_0&\textrm{and}& S^*=S_0,
\end{eqnarray*}
where $(Z^*,S^*)$ is the optimal solution to \eqref{eq:lrr}.
\end{theorem}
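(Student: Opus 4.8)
The plan is to follow the now-standard dual-certificate strategy that underlies RPCA-type exactness proofs, but carried out in the geometry induced by the dictionary $A$ rather than in the ambient column space. The key conceptual move is the change of variables $L = AZ$; since $\mathcal{P}_{U_A}(U_0)=U_0$ and $A$ is low-rank with $\rank{A}\geq\rank{L_0}$, the map $Z\mapsto AZ$ restricted to the row space of $A$ is injective, so recovering $L_0$ is equivalent to recovering $Z_0 := A^+L_0$ (which satisfies $AZ_0=L_0$). I would first establish this equivalence rigorously, showing that $\|Z\|_*$ behaves, on the relevant subspace, like a nuclear norm on $L=AZ$ but with the column-space coherence controlled by $\mu_1(A)$ instead of $\mu_1(L_0)$, and—crucially—with \emph{no} dependence on $\mu_2(L_0)$. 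The intuition for why $\mu_2$ disappears is that the row space of $Z_0$ is inherited from $L_0$, but the dual certificate only ever needs to interact with the \emph{column} space of $A$ through $U_A$, whose coherence is $\mu_1(A)$; the problematic row-space coherence never enters the randomized sampling argument.

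Next I would set up the subgradient/optimality conditions for the convex program \eqref{eq:lrr}. Writing $T$ for the tangent space to the nuclear-norm ball at $Z_0=U_Z\Sigma_ZV_Z^T$, the pair $(Z_0,S_0)$ is the unique optimum provided there exists a dual certificate $W$ obeying the usual conditions: on the support $\Omega$ the dual variable matches $\lambda\,\mathrm{sign}(S_0)$, off $\Omega$ it is bounded by $\lambda$ in sup-norm, its projection onto $T$ equals the subgradient $U_ZV_Z^T$ (pulled back through $A$), and its projection onto $T^\perp$ has operator norm strictly below one. Because $S_0$ is supported on a set $\Omega$ drawn so that $|\Omega|\leq(0.5-\epsilon)mn$, I would model $\Omega$ as a random sign-and-support pattern (Bernoulli with the stated expected fraction) and invoke the standard reduction that exact recovery for fixed signs follows from recovery for random signs with the same support distribution.

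The certificate itself I would build by the golfing scheme: partition $\Omega^c$ into independent batches, and iteratively construct $W=\sum_k \mathcal{P}_{\Omega^c}(\cdots)$ so that the residual $U_ZV_Z^T - \mathcal{P}_T(W)$ contracts geometrically in Frobenius norm. The workhorse estimates are (i) an operator-norm bound $\|\mathcal{P}_T\mathcal{P}_{\Omega^c}\mathcal{P}_T - \rho\,\mathcal{P}_T\|\leq \tfrac12$ on the restricted isometry of the projections, and (ii) control of the $\ell_\infty$ and $\ell_{2,\infty}$ norms of the iterates so that the off-support sup-norm stays below $\lambda=1/\sqrt{n_1}$. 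Both estimates are where the coherence enters, and here the rank bound $\rank{A}\leq \epsilon^2 n_2/(c_a\mu_1(A)\log n_1)$ is calibrated precisely so that these high-probability concentration inequalities hold with the claimed $1-n_1^{-10}$ probability; the logarithmic factor and the constant $c_a$ are absorbed by the number of golfing iterations.

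The main obstacle I anticipate is the injectivity/uniqueness argument at the interface of the two variables: I must show not merely that \emph{some} $Z$ with $AZ=L_0$ is recovered, but that the specific $Z^*=A^+L_0$ is the unique minimizer, which requires verifying that the dictionary-induced tangent space and the support $\Omega$ intersect trivially, i.e. $T\cap\mathcal{P}_{\Omega^c}^\perp$-type transversality after pulling everything back through $A^+$. Controlling how $A^+$ distorts the relevant norms—ensuring the unit-normed columns and the factor $\Sigma_A^{-1}$ do not reinflate the coherence we worked to suppress—is the delicate step; the column-normalization hypothesis $\|Ae_i\|_2=1$ is what keeps these distortions bounded, and threading it through the golfing estimates is where I expect the real work to lie.
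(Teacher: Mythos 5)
There is a genuine gap, and it sits exactly at the point your proposal treats as routine. You set up the dual certificate around the tangent space $T$ of $Z_0=A^+L_0$ and then invoke the golfing scheme, whose workhorse estimate is the restricted isometry $\|\mathcal{P}_T\mathcal{P}_{\Omega^c}\mathcal{P}_T-\rho\,\mathcal{P}_T\|\leq\tfrac12$. But $T$ contains the row-space projector of $Z_0$, and since $AZ_0=L_0$ the row space of $Z_0$ is exactly the row space of $L_0$; hence $\max_j\|\mathcal{P}_T(e_ie_j^T)\|_F^2$ is lower bounded by $\mu_2(L_0)r_0/n$, and every concentration bound in the golfing iterations (the operator-norm bound, the $\ell_\infty$ and $\ell_{2,\infty}$ controls) requires $\mu_2$ to be small. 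The theorem places \emph{no} hypothesis on $\mu_2$ --- the whole point is that $\mu_2$ may be as large as $n$ (e.g., $L_0$ supported on a single column) --- so these estimates simply cannot be established, and your stated intuition that ``the problematic row-space coherence never enters the randomized sampling argument'' is false in the setup you chose. Avoiding $\mu_2$ is not something the dictionary does automatically on top of the standard argument; it forces a structurally different argument.

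The paper's proof makes two moves that your proposal is missing. First, the optimality conditions are not the RPCA-style split into $\mathcal{P}_T$ and $\mathcal{P}_{T^\perp}$: because any feasible perturbation of the low-rank part has the form $A\Delta_1\in\mathrm{range}(U_A)$, the dual conditions (Lemma~\ref{app:lem:dual}) can be written as the \emph{exact} equation $UV^T=\lambda A^T(sign(S_0)+F)$ together with $\mathcal{P}_\Omega(F)=0$ and $\|\mathcal{P}_{\Omega^\perp}(F)\|_\infty<1$, so the only projector that ever meets the random support $\Omega$ is $\mathcal{P}_{U_A}$, whose incoherence is $\mu_1(A)$ (Lemmas~\ref{lem:papo} and \ref{lem:inf:pas0}). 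Second, the certificate is not built by golfing at all --- the paper explicitly notes golfing is not workable here --- but is written in closed form via the Neumann-series inverse $\left(\mathcal{I}+\sum_{i\geq1}(\mathcal{P}_{U_A}\mathcal{P}_\Omega\mathcal{P}_{U_A})^i\right)$ of $\mathcal{P}_{U_A}\mathcal{P}_{\Omega^\perp}\mathcal{P}_{U_A}$ (Lemmas~\ref{app:lem:inverse} and \ref{app:lem:f}); the remaining sup-norm bound on the term involving $\frac{1}{\lambda}(A^T)^+UV^T$ is handled by an $\ell_{2,\infty}$ relaxation and a new coherence quantity $\mu_3^A(L_0)$, which is where the column normalization and conditioning of $A$ actually do their work. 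Your last paragraph correctly flags that controlling the distortion of $A^+$ is delicate, but offers no mechanism for it; in the paper that mechanism is precisely the definition \eqref{eq:u3a} and the bound on $\|F_1\|_\infty$, not a refinement of golfing.
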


By $U_0\subset{}U_A$, the column space of $A$ should approximately have the same properties as $L_0$, and thus, roughly, $\mu_1(A)\approx\mu_1(L_0)$. So, as aforementioned, this paper needs to assume that the first coherence parameter of $L_0$ is small and only addresses the cases where the second coherence parameter might be large. It is worth noting that the restriction $\rank{L_0}\leq{}O(n_2/\log{}n_1)$ is looser than that of PRCA\footnote{In terms of \emph{exact} recovery, $O(n_2/\log{}n_1)$ is probably the ``finest'' bound one could accomplish in theory.}, which requires $\rank{L_0}\leq{}O(n_2/(\log{}n_1)^2)$. The requirement of column-wisely unit-normed dictionary (i.e., $\|Ae_i\|_2=1,\forall{}i$) is purely for complying the parameter estimate of $\lambda=1/\sqrt{n_1}$, which is consistent with RPCA. The condition $\mathcal{P}_{U_A}(U_0)=U_0$, i.e., $U_0$ is a subspace of $U_A$, is indispensable if we ask for exact recovery, because $\mathcal{P}_{U_A}(U_0)=U_0$ is implied by the equality $AZ^*=L_0$. This necessary condition, together with the condition that $A$ is low-rank, indeed provides an elementary criterion for learning the dictionary matrix $A$ in LRR. Figure~\ref{fig:demo} presents an example, which further confirms our main result: LRR is able to avoid $\mu_2$ as long as $U_0\subset{}U_A$ and $A$ is low-rank. Note that it is unnecessary for the dictionary $A$ to strictly satisfy $U_A=U_0$, and LRR is actually tolerant to the ``errors'' possibly existing in the dictionary.
\begin{figure}[h!]
\begin{center}\vspace{0.1in}
\includegraphics[width=0.95\textwidth]{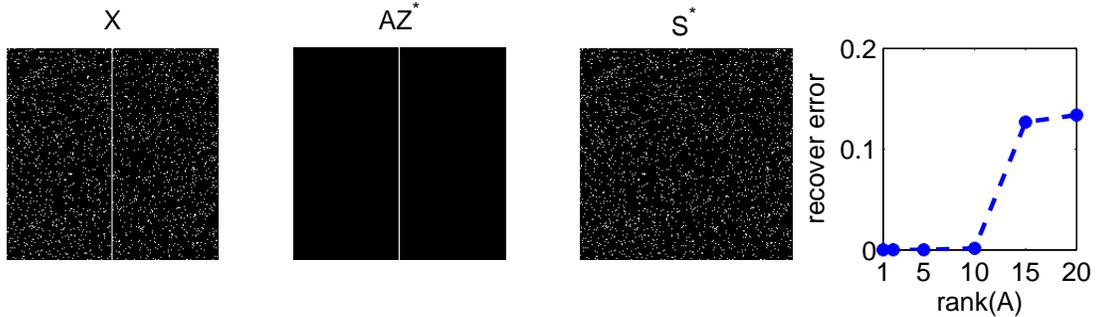}\vspace{-0.1in}
\caption{Exemplifying that LRR can void $\mu_2$. In this experiment, $L_0$ is a $200\times200$ rank-1 matrix with one column being $\mathbf{1}$ (i.e., a vector of all ones) and everything else being zero. Thus, $\mu_1(L_0)=1$ and $\mu_2(L_0)=n=200$. The sparse matrix $S_0$ is with Bernoulli $\{0,1\}$ values, and its nonzero fraction is set as 5\%. The dictionary is set as $A=[\mathbf{1},W]$ ($A$ is further normalized), where $W$ is a $200\times{}p$ random Gaussian matrix ($p$ is varying). As long as $\rank{A}=p+1\leq10$, LRR with $\lambda=0.08$ can exactly recover $L_0$ from a grossly corrupted observation matrix $X$.}\label{fig:demo}\vspace{-0.2in}
\end{center}
\end{figure}

The LRR program \eqref{eq:lrr} is designed for the cases where the uncorrupted observations are noiseless. In reality this is often not true and all entries of $X$ can be contaminated by a small amount of noises, i.e., $X=L_0+S_0+N$, where $N$ is a matrix of dense Gaussian noises. In this case, the formula of LRR \eqref{eq:lrr} need be modified to
\begin{eqnarray}\label{eq:lrr:noisy}
\min_{Z,S}\|Z\|_*+\lambda\|S\|_1, &\textrm{s.t.}&\|X-AZ-S\|_F\leq{}\varepsilon,
\end{eqnarray}
where $\varepsilon$ is a parameter that measures the noise level of data. In the experiments of this paper, we consistently use $\varepsilon=10^{-6}\|X\|_F$. In the presence of dense noises, the latent matrices, $L_0$ and $S_0$, cannot be exactly restored. Yet we have the following theorem to guarantee the near recovery property of the solution produced by \eqref{eq:lrr:noisy} (please refer to Appendix~\ref{app:proof:noisy} for the proof):
\begin{theorem}[Noisy]\label{thm:noisy}
Suppose $\|X-L_0-S_0\|_F\leq\varepsilon$. Let $A\in\mathbb{R}^{m\times{}d}$ with SVD $A=U_A\Sigma_AV_A^T$ be a column-wisely unit-normed dictionary matrix which satisfies $\mathcal{P}_{U_A}(U_0)=U_0$. For any $0<\epsilon<0.35$ and some numerical constant $c_a>1$, if
\begin{eqnarray}\label{eq:succ:noisy}
\rank{L_0}\leq\rank{A}\leq\frac{\epsilon^2n_2}{c_a\mu_1(A)\log{}n_1}&\textrm{and}&|\Omega|\leq(0.35-\epsilon)mn,
\end{eqnarray}
then with probability at least $1-n_1^{-10}$, any solution $(Z^*,S^*)$ to the LRR program \eqref{eq:lrr:noisy} with $\lambda=1/\sqrt{n_1}$ gives a near recovery to $(L_0,S_0)$, in a sense that $\|AZ^*-L_0\|_F\leq{}8\sqrt{mn}\varepsilon$ and $\|S^*-S_0\|_F\leq(8\sqrt{mn}+2)\varepsilon$.
\end{theorem}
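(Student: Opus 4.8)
The plan is to reduce the noisy statement to the already-established noiseless Theorem~\ref{thm:noiseless} by reusing the dual certificate built in its proof, together with a standard perturbation argument of the kind used to pass from exact to stable recovery in RPCA. First I would record that the truth is exactly representable through the dictionary: since $\mathcal{P}_{U_A}(U_0)=U_0$ we have $AA^+L_0=\mathcal{P}_{U_A}(L_0)=L_0$, so with $Z_0:=A^+L_0$ the pair $(Z_0,S_0)$ satisfies $\|X-AZ_0-S_0\|_F=\|X-L_0-S_0\|_F\le\varepsilon$ and is therefore feasible for \eqref{eq:lrr:noisy}. Consequently any optimal $(Z^*,S^*)$ obeys the objective inequality $\|Z^*\|_*+\lambda\|S^*\|_1\le\|Z_0\|_*+\lambda\|S_0\|_1$, which is the only consequence of optimality I will need.

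Next I would set $H_Z=Z^*-Z_0$ and $H_S=S^*-S_0$ and extract two facts. On the feasibility side, passing through $X$ and using the triangle inequality gives $\|AH_Z+H_S\|_F\le 2\varepsilon$; this is the only place the noise enters. On the objective side, I would invoke the convexity (subgradient) inequalities for the nuclear norm at $Z_0$ and for the $\ell_1$ norm at $S_0$: choosing the maximizing subgradients and combining with the objective inequality above yields a bound of the form $\langle U_{Z_0}V_{Z_0}^T, H_Z\rangle+\|\mathcal{P}_{T^{\bot}}H_Z\|_*+\lambda\langle\mathrm{sgn}(S_0),H_S\rangle+\lambda\|\mathcal{P}_{\Omega^{\bot}}H_S\|_1\le 0$, where $T$ denotes the tangent space of the nuclear norm at $Z_0$.

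I would then deploy the dual certificate $W$ constructed in the proof of Theorem~\ref{thm:noiseless} — the matrix that certifies $(Z_0,S_0)$ as the unique noiseless optimum — whose defining relations ($\mathcal{P}_T(A^TW)=U_{Z_0}V_{Z_0}^T$ and $\mathcal{P}_\Omega W=\lambda\,\mathrm{sgn}(S_0)$) are exactly designed to cancel the two linear inner products. Expanding them reduces the linear part to $\langle W, AH_Z+H_S\rangle$ minus two off-space pairings, and the first term is bounded by $\|W\|_F\,\|AH_Z+H_S\|_F\le 2\varepsilon\|W\|_F$. This converts the displayed inequality into $\|\mathcal{P}_{T^{\bot}}H_Z\|_*+\|\mathcal{P}_{\Omega^{\bot}}H_S\|_1\lesssim\varepsilon\|W\|_F$, with the strict slacks in the certificate ($\|\mathcal{P}_{T^{\bot}}(A^TW)\|<1$ and $\|\mathcal{P}_{\Omega^{\bot}}W\|_\infty<\lambda$) supplying the positive constants that absorb the residual; this is where the tightened requirements $\epsilon<0.35$ and $|\Omega|\le(0.35-\epsilon)mn$ are consumed.

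Finally I would convert these component bounds into the claimed Frobenius estimates. Using $\|\cdot\|_F\le\|\cdot\|_*$ and $\|\cdot\|_F\le\|\cdot\|_1$ on the off-tangent/off-support parts, together with the near-injectivity of $\mathcal{P}_T$ and of $A$ restricted to the relevant subspace (both quantified in the noiseless proof) to control the on-tangent/on-support parts, I would first bound $\|AH_Z\|_F=\|AZ^*-L_0\|_F$ and then recover $\|H_S\|_F$ from $\|AH_Z+H_S\|_F\le 2\varepsilon$; tracking the constant through $\|W\|_F$ (whose entries are of order $\lambda=1/\sqrt{n_1}$) and the norm conversions produces the stated $\|AZ^*-L_0\|_F\le 8\sqrt{mn}\,\varepsilon$ and $\|S^*-S_0\|_F\le(8\sqrt{mn}+2)\varepsilon$. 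I expect the main obstacle to be the presence of the dictionary $A$: the nuclear-norm geometry lives in $Z$-space while the constraint and the $\ell_1$ penalty live in $X$-space, so the tangent-space decomposition and the certificate must be pushed through $A$ and $A^+$, and one must verify that the noiseless certificate's strict inequalities leave enough margin to dominate the $O(\varepsilon)$ perturbation while the conditioning of $A$ on $U_A$ does not blow up the final Frobenius constants — which is precisely what forces the slightly stronger numerical thresholds relative to Theorem~\ref{thm:noiseless}.
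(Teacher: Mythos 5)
Your opening moves match the paper's: you identify $(A^+L_0,S_0)$ as feasible (via $AA^+L_0=L_0$), invoke optimality to get the objective inequality, bound the noise term $\|AH_Z+H_S\|_F\le2\varepsilon$, and reuse the noiseless certificate $F$ with its tightened slack (the paper shows $\|F\|_\infty<0.5$ under $|\Omega|\le(0.35-\epsilon)mn$, which is exactly where the $0.35$ thresholds are consumed). The gap is in your final step. You route the argument through the tangent space $T$ of $Z_0$ in $Z$-space, obtaining control only of $\|\mathcal{P}_{T^\bot}H_Z\|_*$ and $\|\mathcal{P}_{\Omega^\bot}H_S\|_1$, and then appeal to ``near-injectivity of $\mathcal{P}_T$ and of $A$ restricted to the relevant subspace (both quantified in the noiseless proof)'' to convert these into Frobenius bounds. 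Neither fact is quantified in the noiseless proof: the tangent space of $A^+L_0$ never appears there, and no bound on $\|A\|$ or on the conditioning of $A$ restricted to any subspace is ever established. Indeed $A$ is only column-wise unit-normed and may have a large null space (in Algorithm~1, $d=n\gg r_A$), so $\|A\|$ can be as large as $\sqrt{d}$; any estimate that passes through $\|A\|$ or $\gamma_A$ produces dictionary-dependent constants and cannot yield the clean $8\sqrt{mn}\varepsilon$ and $(8\sqrt{mn}+2)\varepsilon$ claimed. Moreover, the certificate gives no information at all about the on-tangent component $\mathcal{P}_TH_Z$; the only way to control it is through the constraint, which requires an angle condition between the image of $T$ under $A$ and $\Omega$ --- and the only angle the paper proves is in $X$-space, namely $\|\mathcal{P}_{U_A}\mathcal{P}_{\Omega}\|\le\sqrt{\rho_0+\epsilon}$ (Lemma~\ref{lem:papo}).

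The idea you are missing, which makes the paper's proof short and constant-explicit, is that one never needs $T$ at all: since $N_L=AZ^*-L_0$ automatically lies in the column space $U_A$, the whole argument can live in $X$-space. The paper uses the \emph{plain} subgradient inequalities (no $T^\bot$ gain term), chooses the $\ell_1$ subgradient $sign(S_0)+H$ with $H=-\mathcal{P}_{\Omega^\bot}(sign(N_L))$, and pushes the exact certificate identity $UV^T=\lambda A^T(sign(S_0)+F)$ through $A$ so that the nuclear-norm term becomes $\lambda\langle sign(S_0)+F,N_L\rangle$. Combining yields $0\ge\langle F-H,N_L\rangle+\langle sign(S_0)+H,E\rangle\ge 0.5\|\mathcal{P}_{\Omega^\bot}(N_L)\|_1-\|E\|_1$, hence $\|\mathcal{P}_{\Omega^\bot}(N_L)\|_1\le2\|E\|_1\le4\sqrt{mn}\varepsilon$. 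Then, precisely because $N_L\in\mathcal{P}_{U_A}$, the Neumann-series inverse $\mathcal{P}=(\mathcal{P}_{U_A}\mathcal{P}_{\Omega^\bot}\mathcal{P}_{U_A})^{-1}$ of Lemma~\ref{app:lem:inverse} applies, with $\|\mathcal{P}\|\le1/(1-\rho_0-\epsilon)\le2$ under the tightened support condition, giving $\|N_L\|_F=\|\mathcal{P}\mathcal{P}_{U_A}\mathcal{P}_{\Omega^\bot}(N_L)\|_F\le2\|\mathcal{P}_{\Omega^\bot}(N_L)\|_F\le8\sqrt{mn}\varepsilon$, and finally $\|N_S\|_F\le\|E\|_F+\|N_L\|_F\le(8\sqrt{mn}+2)\varepsilon$. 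In short: your certificate-perturbation skeleton is right, but the tangent-space detour both requires tools the paper does not supply and destroys the stated constants; the column-space membership of $N_L$ plus the inverse operator $\mathcal{P}$ is the missing mechanism.
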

\subsection{An Unsupervised Algorithm for Matrix Recovery}
To well handle coherent data, Theorem~\ref{thm:noiseless} suggests that, ideally, the dictionary matrix $A$ should be low-rank and satisfy $U_0\subset{}U_A$. In certain supervised environment, this would be easy as one could use clear, well-processed training data to construct the dictionary. In unsupervised environments, however, it is challenging to purse a low-rank dictionary that can also satisfy $U_0\subset{}U_A$, since $U_0\subset{}U_A$ is essentially some kind of ``weak'' supervision information: As long as the dictionary matrix $A$ is low-rank, $U_0\subset{}U_A$ forms a prior that $L_0$ is known to be contained by a low-rank subspace identified by $U_A$. Interestingly, as will be shown later, it is possible to approximate the desired dictionary even when no prior about $L_0$ is given.

We shall introduce a heuristic algorithm that works distinctly better than RPCA in our experiments. As can be seen from \eqref{rpca:cond}, RPCA is actually not brittle with respect to coherent data: Except for the extreme case where the coherence parameters reach the upper bound $n$ (or $m$), RPCA could own a valid condition (although the condition is narrowed) to be \emph{exactly} successful even when the coherence parameters are considerably large. Based on this, we propose a pretty simple algorithm, as summarized in Algorithm \ref{alg:mr}, to achieve a \emph{solid} improvement over RPCA. Our idea is straightforward: We firstly obtain an estimate of $L_0$ by using RPCA and then utilize the estimate to construct the dictionary matrix $A$. The post-processing steps (Step 2 and Step 3) that slightly modify the solution of RPCA are designed to encourage well-conditioned dictionary, which is the favorite circumstance indicated by Theorem~\ref{thm:noiseless}.
\begin{algorithm}[h!]
\caption{Matrix Recovery}\label{alg:mr}
\begin{algorithmic}
\STATE \textbf{input:} Observed data matrix $X\in{}\mathbb{R}^{m\times{}n}$.
\STATE\textbf{adjustable parameter:} $\lambda$.
\STATE \textbf{1.} Solve for $\hat{L}_0$ by optimizing the RPCA problem \eqref{eq:rpca} with $\lambda=1/\sqrt{n_1}$.
\STATE \textbf{2.} Estimate the rank of $\hat{L}_0$ by $$\hat{r}_0=\#\{i: \sigma_i>10^{-3}\sigma_1\},$$
where $\sigma_1,\sigma_2,\cdots,\sigma_{n_2}$ are the singular values of $\hat{L}_0$.
\STATE \textbf{3.} Form $\tilde{L}_0$ by using the rank-$\hat{r}_0$ approximation of $\hat{L}_0$. That is,
\begin{eqnarray*}
\tilde{L}_0=\arg\min_{L}\|L-\hat{L}_0\|_F^2, \textrm{ s.t. } \rank{L}\leq\hat{r}_0,
\end{eqnarray*}
which is solved by SVD.
\STATE \textbf{4.} Construct a dictionary $\hat{A}$ from $\tilde{L}_0$ by normalizing the column vectors of $\tilde{L}_0$:
\begin{eqnarray*}
[\hat{A}]_{:,i}=\frac{[\tilde{L}_0]_{:,i}}{\|[\tilde{L}_0]_{:,i}\|_2}, i=1,\cdots,n,
\end{eqnarray*}
where $[\cdot]_{:,i}$ denotes the $i$th column of a matrix.
\STATE\textbf{5.} Solve for $Z^*$ by optimizing the LRR problem \eqref{eq:lrr} with $A=\hat{A}$ and $\lambda=1/\sqrt{n_1}$.
\STATE \textbf{output:} $\hat{A}Z^*$.
\end{algorithmic}
\end{algorithm}

Whenever the recovery produced by RPCA is already exact, the claim in Theorem~\ref{thm:noiseless} gives that the recovery produced by our Algorithm \ref{alg:mr} is exact as well. When RPCA fails to exactly recover $L_0$, the produced dictionary is still possible to satisfy the success conditions required by Theorem~\ref{thm:noiseless}, namely $A$ is low-rank and $U_0\subset{}U_A$. This is because those conditions are weaker than $A=L_0$. Thus, in terms of exactly recovering $L_0$ from a given $X$, the success probability of our Algorithm \ref{alg:mr} is greater than or equal to that of RPCA. Also, in a computational sense, Algorithm \ref{alg:mr} does not double RPCA, although there are two convex programs in our algorithm. In fact, according to our simulations, usually the computational time of Algorithm \ref{alg:mr} is just 1.2 times as much as RPCA. The reason is that, as has been explored by~\citep{tpami_2013_lrr}, the complexity of solving the LRR problem \eqref{eq:lrr} is $O(n^2r_A)$ (assume $m=n$), which is much lower than that of RPCA (which requires $O(n^3)$) provided that the obtained dictionary matrix $A$ is fairly low-rank (i.e., $r_A$ is small).

One may have noticed that the procedure of Algorithm \ref{alg:mr} could be made iterative, i.e., one can consider $\hat{A}Z^*$ as a new estimate of $L_0$ and use it to further update the dictionary matrix $A$, and so on. Nevertheless, we empirically find that such an iterative procedure often converges within two iterations. Hence, for the sake of simplicity, we do not consider the iterative strategies in this paper.
\section{Proof of Theorem~\ref{thm:noiseless}}\label{sec:proof}
\subsection{Settings and Some Basic Lemmas}
The same as in RPCA~\citep{Candes:2009:JournalACM}, we assume that the locations of the corrupted entries are selected \emph{uniformly at random}. In more details, we work with the Bernoulli model $\Omega=\{(i,j):\delta_{ij}=1\}$, where $\delta_{ij}$'s are i.i.d. variables taking value one with probability $\rho_0=|\Omega|/(mn)$ and zero with probability $(1-\rho_0)$, so that the expected cardinality of $\Omega$ is $\rho_0mn$. For the ease of presentation, we assume that the signs of the nonzero entries of $S_0$ are symmetric Bernoulli $\pm1$ values:
\begin{eqnarray*}
[sign(S_0)]_{ij}=\left\{\begin{array}{ll}
1,& \textrm{with probability } \frac{\rho_0}{2},\\
0,& \textrm{with probability } 1-\rho_0,\\
-1,&\textrm{with probability } \frac{\rho_0}{2}.
\end{array}\right.
\end{eqnarray*}
For general sign matrices, the same as in RPCA~\citep{Candes:2009:JournalACM}, our Theorem~\ref{thm:noiseless} can still be proved by globally placing an elimination theorem and a derandomization scheme. Yet the success conditions in Theorem~\ref{thm:noisy} have not been proven when $sign(S_0)$ has an arbitrary distribution, because the elimination theorem does not hold in the noisy case.

The following two lemmas are well-known and will be used multiple times in the proof.
\begin{lemma}\label{app:lem:basic:1}For any matrix $M$, the following holds:
\begin{itemize}
\item[1.] Let the SVD of $M$ be $U_M\Sigma_MV_M^T$. Then we have $\partial\|M\|_*=\{U_MV_M^T+W|U_M^TW=0, WV_M=0,\|W\|\leq1\}$.
\item[2.] Let the support set of $M$ be $\Omega_M$. Then we have $\partial\|M\|_{1}=\{sign(M)+F|\mathcal{P}_{\Omega_M}(F)=0, \|F\|_{\infty}\leq1\}$.
\end{itemize}
\end{lemma}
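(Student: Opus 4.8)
The plan is to prove the two items independently, since they rely on different machinery: the $\ell_1$ formula follows from separability and scalar subdifferential calculus, whereas the nuclear-norm formula requires the duality between $\|\cdot\|_*$ and the operator norm $\|\cdot\|$. Throughout I would use the trace inner product $\langle A,B\rangle=\trace{A^TB}$ and the characterization that $G\in\partial f(M)$ iff $f(N)\geq f(M)+\langle G,N-M\rangle$ for all $N$.

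For the second item I would exploit the fact that $\|M\|_1=\sum_{i,j}|[M]_{ij}|$ decouples across entries, so its subdifferential is the entrywise product of the scalar subdifferentials of $|\cdot|$. For an entry with $[M]_{ij}\neq0$ the subgradient is pinned to $sign([M]_{ij})$, while for an entry with $[M]_{ij}=0$ it ranges over $[-1,1]$. Collecting these constraints, any subgradient $G$ splits as $G=sign(M)+F$, where $F$ vanishes on the support $\Omega_M$ (the pinned entries) and obeys $\|F\|_{\infty}\leq1$ (the free entries off $\Omega_M$). This matches the asserted set, so the second part follows with essentially no computation.

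The first item is the substantive one. I would start from the dual-norm description $\partial\|M\|_*=\{G:\|G\|\leq1,\ \langle G,M\rangle=\|M\|_*\}$ and identify this set with the stated one. The easy inclusion is direct: if $G=U_MV_M^T+W$ with $U_M^TW=0$, $WV_M=0$ and $\|W\|\leq1$, then $\langle G,M\rangle=\trace{\Sigma_M}=\|M\|_*$ (the $W$-term drops because $W^TU_M=0$), and since $U_MV_M^T$ and $W$ have mutually orthogonal column and row supports, their singular values do not interact and $\|G\|=\max(1,\|W\|)\leq1$. For the reverse inclusion I would set $B=U_M^TGV_M$, note $\|B\|\leq\|G\|\leq1$ so that $|B_{ii}|\leq1$, and observe that $\|M\|_*=\langle G,M\rangle=\sum_i\sigma_iB_{ii}\leq\sum_i\sigma_i=\|M\|_*$ forces $B_{ii}=1$ for every $i$; the equality case of Cauchy--Schwarz ($e_i^TBe_i=1$ with $\|Be_i\|\leq1$) then forces $Be_i=e_i$, i.e.\ $B=\Id$, so the top-left block satisfies $\mathcal{P}_{U_M}\mathcal{P}_{V_M}(G)=U_M(U_M^TGV_M)V_M^T=U_MV_M^T$.

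The main obstacle is showing that the two off-diagonal blocks $\mathcal{P}_{U_M}\mathcal{P}_{V_M^{\bot}}(G)$ and $\mathcal{P}_{U_M^{\bot}}\mathcal{P}_{V_M}(G)$ vanish, so that $G$ collapses to $U_MV_M^T$ plus a purely complementary remainder. I would argue row by row: writing $u_i$ for the $i$th column of $U_M$, we have $u_i^TGV_M=e_i^T$, hence $1=\|e_i^T\|\leq\|u_i^TG\|\leq\|u_i\|_2\|G\|\leq1$, so $\|u_i^TG\|=1$. Decomposing $u_i^TG$ orthogonally into its component along $V_M$ (of norm $1$) and its component in $V_M^{\bot}$, the Pythagorean identity forces the latter to vanish, giving $u_i^TG(\Id-V_MV_M^T)=0$ for all $i$ and thus $\mathcal{P}_{U_M}\mathcal{P}_{V_M^{\bot}}(G)=0$; the symmetric column argument kills the other block. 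Setting $W=\mathcal{P}_{U_M^{\bot}}\mathcal{P}_{V_M^{\bot}}(G)$ then yields $G=U_MV_M^T+W$ with $U_M^TW=0$, $WV_M=0$, and $\|W\|\leq\|G\|\leq1$ because $W$ is obtained from $G$ by norm-nonincreasing projections, which completes the identification.
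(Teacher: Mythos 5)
Your proof is correct, but there is no paper proof to compare it against: the paper states this lemma as ``well-known'' and provides no argument for it, so your write-up fills a gap the authors deliberately left open. On its own merits the proposal is sound. The $\ell_1$ part is the standard separability argument and needs no further comment. For the nuclear-norm part, your route --- start from the dual-norm characterization $\partial\|M\|_*=\{G:\|G\|\leq 1,\ \langle G,M\rangle=\|M\|_*\}$, force the block $B=U_M^TGV_M$ to equal the identity via $\sum_i\sigma_i[B]_{ii}=\sum_i\sigma_i$ together with the equality case of Cauchy--Schwarz (valid because the paper's SVD is the skinny SVD, so every $\sigma_i>0$), and then annihilate the mixed blocks by norm saturation ($\|u_i^TG\|_2\leq 1$ while its component along $V_M$ already has norm one, so the $V_M^{\bot}$-component vanishes by Pythagoras, and symmetrically for columns) --- is a complete and standard derivation, and the final bookkeeping is right: $W=(\Id-U_MU_M^T)G(\Id-V_MV_M^T)$ satisfies $U_M^TW=0$, $WV_M=0$, and $\|W\|\leq\|G\|\leq1$ since orthogonal projections are norm-nonincreasing, while the converse inclusion follows from $\langle W,M\rangle=0$ and the fact that $U_MV_M^T$ and $W$ have orthogonal row and column spaces, giving $\|U_MV_M^T+W\|=\max(1,\|W\|)$. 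The two facts you invoke without proof --- that the operator norm is dual to the nuclear norm, and that for any norm $\partial\|x\|$ consists of the dual-unit-ball vectors $g$ with $\langle g,x\rangle=\|x\|$ --- are standard convex-analysis results, an acceptable level of citation given that the paper itself cites the entire lemma as known.
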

\begin{lemma}\label{lem:basic:2}
For any matrices $M$ and $N$ of consistent sizes,
\begin{align}\notag
&|\langle{}M,N\rangle|\leq\|M\|_{\infty}\|N\|_{1},\\\notag
&|\langle{}M,N\rangle|\leq\|M\|_F\|N\|_F,\\\notag
&\|MN\|_F\leq\|M\|\|N\|_F,\\\notag
&\|MN\|_{2,\infty}\leq\|M\||N\|_{2,\infty}.
\end{align}
\end{lemma}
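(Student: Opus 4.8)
The final statement is Lemma~\ref{lem:basic:2}, which collects four elementary matrix-norm inequalities. My plan is to prove each of the four lines directly from the definitions of the norms given in Section~\ref{sec:notation}, treating them independently since they rely on different basic facts. The overall approach is to reduce each inequality to a standard duality pairing or to a column-by-column argument, so no deep machinery is needed; these are all ``folklore'' bounds whose proofs are short but worth recording carefully.

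For the first inequality, $|\langle M,N\rangle|\le\|M\|_\infty\|N\|_1$, I would expand the Frobenius inner product as $\langle M,N\rangle=\sum_{i,j}[M]_{ij}[N]_{ij}$, apply the triangle inequality to get $|\langle M,N\rangle|\le\sum_{i,j}|[M]_{ij}|\,|[N]_{ij}|$, and then bound each $|[M]_{ij}|$ by $\max_{i,j}|[M]_{ij}|=\|M\|_\infty$, pulling that scalar out of the sum to leave $\sum_{i,j}|[N]_{ij}|=\|N\|_1$. This is simply H\"older's inequality for the $\ell_\infty$--$\ell_1$ dual pair applied to the matrices viewed as long vectors. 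The second inequality, $|\langle M,N\rangle|\le\|M\|_F\|N\|_F$, is the Cauchy--Schwarz inequality for the same long-vector inner product, so I would state it as such.

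For the third inequality, $\|MN\|_F\le\|M\|\,\|N\|_F$, I would argue column-wise: writing $N=[\,n_1,\dots,n_k\,]$ by columns, we have $MN=[\,Mn_1,\dots,Mn_k\,]$, so $\|MN\|_F^2=\sum_j\|Mn_j\|_2^2$. The defining property of the operator norm gives $\|Mn_j\|_2\le\|M\|\,\|n_j\|_2$ for each $j$; squaring and summing over $j$ yields $\|MN\|_F^2\le\|M\|^2\sum_j\|n_j\|_2^2=\|M\|^2\|N\|_F^2$, and taking square roots finishes it. The fourth inequality, $\|MN\|_{2,\infty}\le\|M\|\,\|N\|_{2,\infty}$, follows the same column-wise idea but with the $\ell_{2,\infty}$ norm: since $\|MN\|_{2,\infty}=\max_j\|Mn_j\|_2$ by definition, I bound each $\|Mn_j\|_2\le\|M\|\,\|n_j\|_2$ as before and then take the maximum over $j$, using $\max_j\|n_j\|_2=\|N\|_{2,\infty}$. (Here I read the paper's ``$\|M\||N\|_{2,\infty}$'' as the intended product $\|M\|\,\|N\|_{2,\infty}$, a typographical slip for a missing norm bar.)

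There is no real obstacle here; the only points requiring mild care are keeping the norm definitions straight (especially distinguishing the operator norm $\|\cdot\|$, which supplies $\|Mv\|_2\le\|M\|\|v\|_2$, from the Frobenius and $\ell_{2,\infty}$ norms in lines three and four) and confirming that the matrix shapes are compatible so that the products $MN$ are well defined, which the hypothesis ``of consistent sizes'' guarantees. Since all four bounds are classical, I would present the proof as four brief, self-contained paragraphs rather than invoking any external result.
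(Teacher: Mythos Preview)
Your proposal is correct: each of the four inequalities is handled by exactly the right elementary argument (H\"older for $\ell_\infty/\ell_1$, Cauchy--Schwarz for Frobenius, and the column-wise operator-norm bound for the last two). The paper itself does not supply a proof of this lemma, treating the four inequalities as well-known facts, so your write-up actually fills in what the paper leaves implicit.
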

\subsection{Critical Lemmas}
First of all, we would like to prove that the sparse matrix $S_0$ does not locate in the column space of the dictionary $A$, i.e., $U_A\cap\Omega=\{0\}$ or $\|\mathcal{P}_{U_A}\mathcal{P}_{\Omega}\|<1$ as equal. Provided that $A\in\mathbb{R}^{m\times{}d}$ is fairly low-rank, the analysis in~\citep{Candes:2009:JournalACM} gives that $$\|\mathcal{P}_{T_A}\mathcal{P}_{\Omega}\|\leq\sqrt{\frac{|\Omega|}{mn}+\epsilon}$$ holds with high probability for any $\epsilon>0$, where $T_A$ denotes the linear space given by $\mathcal{P}_{U_A}+\mathcal{P}_{V_A}-\mathcal{P}_{U_A}\mathcal{P}_{V_A}$. Since $U_A\subset{}T_A$ and $\|\mathcal{P}_{U_A}\mathcal{P}_{\Omega}\|\leq{}\|\mathcal{P}_{T_A}\mathcal{P}_{\Omega}\|$, it is natural to anticipate that $\|\mathcal{P}_{U_A}\mathcal{P}_{\Omega}\|$ is smaller than 1 with high probability. The difference is that we only need the first coherence parameter $\mu_1$ to finish the proof. Following the techniques in~\citep{Candes:2009:JournalACM}, we have the following lemma to bound the operator norm of $\mathcal{P}_{U_A}\mathcal{P}_{\Omega}$.

\begin{lemma}\label{lem:papo}
Suppose $\Omega\sim{}Ber(\rho_0)$ with $\rho_0<1$. Then for any $\epsilon>0$,
\begin{eqnarray*}
\|\mathcal{P}_{U_A}\mathcal{P}_{\Omega}\|\leq\sqrt{\rho_0+\epsilon}
\end{eqnarray*}
holds with probability at least $1-n_1^{-10}$, provided that
\begin{eqnarray*}
\rank{A}\leq\frac{\epsilon^2n_2}{c_a\mu_1(A)\log{}n_1}.
\end{eqnarray*}
\end{lemma}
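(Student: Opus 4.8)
The plan is to follow the operator-norm concentration strategy of \citet{Candes:2009:JournalACM}, but to work only with the column-space projection $\mathcal{P}_{U_A}$, so that the whole argument sees the first coherence parameter $\mu_1(A)$ and never $\mu_2$. First I would reduce the claim to a centered-sum estimate. Since $\mathcal{P}_\Omega$ and $\mathcal{P}_{U_A}$ are self-adjoint and $\mathcal{P}_\Omega$ is idempotent, we have $\|\mathcal{P}_{U_A}\mathcal{P}_\Omega\|^2 = \|\mathcal{P}_{U_A}\mathcal{P}_\Omega\mathcal{P}_{U_A}\|$. Writing $\mathcal{P}_{U_A}\mathcal{P}_\Omega\mathcal{P}_{U_A} = \rho_0\mathcal{P}_{U_A} + \mathcal{P}_{U_A}(\mathcal{P}_\Omega - \rho_0\mathcal{I})\mathcal{P}_{U_A}$ and using $\|\mathcal{P}_{U_A}\|=1$, the triangle inequality gives $\|\mathcal{P}_{U_A}\mathcal{P}_\Omega\|^2 \le \rho_0 + \|\mathcal{P}_{U_A}(\mathcal{P}_\Omega - \rho_0\mathcal{I})\mathcal{P}_{U_A}\|$. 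Hence it suffices to show that $\|\mathcal{P}_{U_A}(\mathcal{P}_\Omega - \rho_0\mathcal{I})\mathcal{P}_{U_A}\| \le \epsilon$ with probability at least $1 - n_1^{-10}$, since this yields $\|\mathcal{P}_{U_A}\mathcal{P}_\Omega\|\le\sqrt{\rho_0+\epsilon}$.

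Next I would expand the centered operator as a sum of independent, mean-zero, self-adjoint random operators. Writing $\mathcal{E}_{ij}(M) = [M]_{ij}\,e_ie_j^T$ for the coordinate projections, one has $\mathcal{P}_\Omega = \sum_{i,j}\delta_{ij}\mathcal{E}_{ij}$ and $\mathcal{I} = \sum_{i,j}\mathcal{E}_{ij}$, so that $\mathcal{P}_{U_A}(\mathcal{P}_\Omega - \rho_0\mathcal{I})\mathcal{P}_{U_A} = \sum_{i,j}\tau_{ij}$ with $\tau_{ij} = (\delta_{ij}-\rho_0)\,\mathcal{P}_{U_A}\mathcal{E}_{ij}\mathcal{P}_{U_A}$. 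A direct computation shows that $\mathcal{P}_{U_A}\mathcal{E}_{ij}\mathcal{P}_{U_A}$ is the rank-one positive operator $M \mapsto \langle M, \mathcal{P}_{U_A}(e_ie_j^T)\rangle\,\mathcal{P}_{U_A}(e_ie_j^T)$, whose operator norm equals $\|\mathcal{P}_{U_A}(e_ie_j^T)\|_F^2 = \|U_A^Te_i\|_2^2$. By the definition of $\mu_1$ this is at most $\rank{A}\,\mu_1(A)/m$, which controls the per-term bound $B := \max_{i,j}\|\tau_{ij}\| \le \rank{A}\,\mu_1(A)/m$, using $|\delta_{ij}-\rho_0|\le 1$.

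The variance parameter is handled by the same coherence estimate together with a telescoping identity. From $(\mathcal{P}_{U_A}\mathcal{E}_{ij}\mathcal{P}_{U_A})^2 = \|U_A^Te_i\|_2^2\,\mathcal{P}_{U_A}\mathcal{E}_{ij}\mathcal{P}_{U_A}$ and $\mathbb{E}[(\delta_{ij}-\rho_0)^2] = \rho_0(1-\rho_0)$, each $\mathbb{E}[\tau_{ij}^2]$ is positive semidefinite, and summing gives $\sum_{i,j}\mathbb{E}[\tau_{ij}^2] \preceq \rho_0(1-\rho_0)\,\tfrac{\rank{A}\,\mu_1(A)}{m}\sum_{i,j}\mathcal{P}_{U_A}\mathcal{E}_{ij}\mathcal{P}_{U_A}$. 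The crucial point — and the reason only $\mu_1$ enters — is that the free column index sums away: $\sum_{i,j}\mathcal{P}_{U_A}\mathcal{E}_{ij}\mathcal{P}_{U_A} = \mathcal{P}_{U_A}\mathcal{I}\mathcal{P}_{U_A} = \mathcal{P}_{U_A}$, so no factor of $n$ and no $\mu_2$ is produced, and $\sigma^2 := \|\sum_{i,j}\mathbb{E}[\tau_{ij}^2]\| \le \rank{A}\,\mu_1(A)/m$. I would then apply the noncommutative Bernstein inequality (as in \citep{Candes:2009:JournalACM}) to $\sum_{i,j}\tau_{ij}$ with these $B$ and $\sigma^2$: for the nontrivial range $0<\epsilon\le 1$ the failure probability is at most a polynomial dimension factor times $2\exp\!\big(-c\,\epsilon^2 m/(\rank{A}\,\mu_1(A))\big)$. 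Absorbing the dimension factor into the exponent through $\log n_1$ and using $n_2 \le m$, the hypothesis $\rank{A} \le \epsilon^2 n_2/(c_a\,\mu_1(A)\log n_1)$ with $c_a$ chosen large enough pushes the exponent below $-11\log n_1$, yielding the bound $n_1^{-10}$.

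The main obstacle is the bookkeeping in the Bernstein step: one must verify that the ambient dimensional prefactor (polynomial in $n_1$) is genuinely swallowed by the $\log n_1$ in the denominator of the rank bound, and calibrate the universal constant $c_a$ so that the clean threshold $\sqrt{\rho_0+\epsilon}$ emerges rather than a weaker estimate. A secondary subtlety is keeping the column index truly free throughout, since any inadvertent coupling of the $\mathcal{E}_{ij}$ to the row space would reintroduce $\mu_2$ and defeat the purpose of the lemma; the telescoping identity $\sum_{i,j}\mathcal{P}_{U_A}\mathcal{E}_{ij}\mathcal{P}_{U_A} = \mathcal{P}_{U_A}$ is precisely what guarantees this does not occur.
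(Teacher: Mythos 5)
Your proposal is correct, and its skeleton matches the paper's proof: both reduce via $\|\mathcal{P}_{U_A}\mathcal{P}_{\Omega}\|^2=\|\mathcal{P}_{U_A}\mathcal{P}_{\Omega}\mathcal{P}_{U_A}\|$, center this operator (your $\mathcal{P}_{U_A}(\mathcal{P}_{\Omega}-\rho_0\mathcal{I})\mathcal{P}_{U_A}$ is exactly the paper's $(1-\rho_0)\mathcal{P}_{U_A}-\mathcal{P}_{U_A}\mathcal{P}_{\Omega^\bot}\mathcal{P}_{U_A}$, modulo sign typos in the paper's display), expand it into the rank-one terms $\mathcal{P}_{U_A}(e_ie_j^T)\otimes\mathcal{P}_{U_A}(e_ie_j^T)$, and use the single coherence estimate $\|\mathcal{P}_{U_A}(e_ie_j^T)\|_F^2\leq\mu_1(A)r_A/m$ before undoing the centering with the triangle inequality. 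Where you genuinely differ is the concentration tool: the paper disposes of the centered sum by citation, invoking Rudelson's selection theorem \citep{Rudelson99randomvectors} and ``following the proof procedure of'' \citep{Candes:2009:math}, which yields a bound of the form $\phi_1\sqrt{\mu_1(A)r_A\log{}n_1/n_2}+\phi_2\sqrt{\mu_1(A)\beta{}r_A\log{}n_1/n_2}$ with unspecified constants $\phi_1,\phi_2$ and then sets $c_a=(\phi_1+\sqrt{10}\phi_2)^2$; you instead run the operator (noncommutative) Bernstein inequality with explicit parameters $B,\sigma^2\leq\mu_1(A)r_A/m$, the crucial observation in both arguments being that $\sum_{i,j}\mathcal{P}_{U_A}\mathcal{E}_{ij}\mathcal{P}_{U_A}=\mathcal{P}_{U_A}$, so the column index sums away and only $\mu_1$ ever appears. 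Your route buys self-containedness and a traceable numerical constant $c_a$ (the ambient-dimension prefactor $mn\leq{}n_1^2$ is indeed swallowed once $c_a$ is moderately large, as you flag), and it is closer in spirit to the paper's own Lemma~\ref{lem:inf:pas0}, which uses scalar Bernstein in the same way; the paper's route simply inherits its constants from the matrix-completion literature. Your restriction to $0<\epsilon\leq1$ is harmless, since for $\epsilon>1$ the claim is vacuous ($\|\mathcal{P}_{U_A}\mathcal{P}_{\Omega}\|\leq1$ deterministically). No gap.
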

\begin{proof}
For any matrix $M$, we have
\begin{eqnarray*}
\mathcal{P}_{U_A}(M)=\sum_{i,j}\langle{}\mathcal{P}_{U_A}(M),e_ie_j^T\rangle{}e_ie_j^T,
\end{eqnarray*}
and so
\begin{eqnarray*}
\mathcal{P}_{\Omega^\bot}\mathcal{P}_{U_A}(M)=\sum_{i,j}(1-\delta_{ij})\langle{}\mathcal{P}_{U_A}(M),e_ie_j^T\rangle{}e_ie_j^T,
\end{eqnarray*}
which gives
\begin{eqnarray*}
\mathcal{P}_{U_A}\mathcal{P}_{\Omega^\bot}\mathcal{P}_{U_A}(M)&=&\sum_{i,j}(1-\delta_{ij})\langle{}\mathcal{P}_{U_A}(M),e_ie_j^T\rangle{}\mathcal{P}_{U_A}(e_ie_j^T)\\
&=&\sum_{i,j}(1-\delta_{ij})\langle{}M,\mathcal{P}_{U_A}(e_ie_j^T)\rangle{}\mathcal{P}_{U_A}(e_ie_j^T).
\end{eqnarray*}
Note that the Frobenius norm of a matrix is equivalent to the vector $\ell_2$ norm, while considering the matrix as a long vector. In that sense, we have
\begin{eqnarray*}
\mathcal{P}_{U_A}\mathcal{P}_{\Omega^\bot}\mathcal{P}_{U_A}=\sum_{i,j}(1-\delta_{ij})\mathcal{P}_{U_A}(e_ie_j^T)\otimes\mathcal{P}_{U_A}(e_ie_j^T).
\end{eqnarray*}
The definition of $\mu_1(A)$ gives
\begin{eqnarray*}
\|\mathcal{P}_{U_A}(e_ie_j^T)\|_F^2\leq{}\frac{\mu_1(A)r_A}{m}.
\end{eqnarray*}
Then by using the results in~\citep{Rudelson99randomvectors} and following the proof procedure of~\citep{Candes:2009:math}, we have that
\begin{eqnarray*}
\|(1-\rho_0)\mathcal{P}_{U_A}-\mathcal{P}_{U_A}\mathcal{P}_{\Omega^\bot}\mathcal{P}_{U_A}\|&\leq&(1-\rho_0)(\phi_1\sqrt{\frac{\mu_1(A)r_A\log{}n_1}{n_2}}+\phi_2\sqrt{\frac{\mu_1(A)\beta{}r_A\log{}n_1}{n_2}})\\
&\leq&\phi_1\sqrt{\frac{\mu_1(A)r_A\log{}n_1}{n_2}}+\phi_2\sqrt{\frac{\mu_1(A)\beta{}r_A\log{}n_1}{n_2}}
\end{eqnarray*}
holds with probability at least $1-n_1^{-\beta}$ for some numerical constants $\phi_1$ and $\phi_2$. For any $\epsilon>0$, setting $\beta=10$ and $c_a=(\phi_1+\sqrt{10}\phi_2)^2$ gives that
\begin{eqnarray*}
\|(1-\rho_0)\mathcal{P}_{U_A}-\mathcal{P}_{U_A}\mathcal{P}_{\Omega^\bot}\mathcal{P}_{U_A}\|\leq\epsilon
\end{eqnarray*}
holds with probability at least $1-n_1^{-10}$, provided that $r_A\leq{}\epsilon^2n_2/(c_a\mu_1(A)\log{}n_1)$.

By $\mathcal{P}_{U_A}\mathcal{P}_{\Omega}\mathcal{P}_{U_A}=-\rho_0\mathcal{P}_{U_A}-((1-\rho_0)\mathcal{P}_{U_A}-\mathcal{P}_{U_A}\mathcal{P}_{\Omega^\bot}\mathcal{P}_{U_A})$ and the triangle inequality,
\begin{eqnarray*}
\|\mathcal{P}_{U_A}\mathcal{P}_{\Omega}\mathcal{P}_{U_A}\|&\leq&\|\rho_0\mathcal{P}_{U_A}\|+\|(1-\rho_0)\mathcal{P}_{U_A}-\mathcal{P}_{U_A}\mathcal{P}_{\Omega^\bot}\mathcal{P}_{U_A}\|\\
&\leq&\rho_0+\epsilon=\frac{|\Omega|}{mn}+\epsilon.
\end{eqnarray*}
Finally, the fact $\|\mathcal{P}_{U_A}\mathcal{P}_{\Omega}\mathcal{P}_{U_A}\|=\|\mathcal{P}_{U_A}\mathcal{P}_{\Omega}\|^2$ completes the proof.
\end{proof}

While the above Lemma implies that $\|\mathcal{P}_{U_A}\mathcal{P}_{\Omega}(M)\|_F\leq{}(\rho_0+\epsilon)\|M\|_F$, we often need to bound the sup-norm of $\mathcal{P}_{U_A}\mathcal{P}_{\Omega}(M)$. The next lemma will show that, when the signs of the matrix entries are independent symmetric Bernoulli variables, the sup-norm could be arbitrarily small.
\begin{lemma}\label{lem:inf:pas0}
Suppose $\mathcal{P}$ is a symmetric linear projection with $\|\mathcal{P}\|\leq2$, and $\Psi\in\mathbb{R}^{m\times{}n}$ is a random sign matrix with i.i.d. entries distributed as
\begin{eqnarray*}
[\Psi]_{ij}=\left\{\begin{array}{ll}
1,& \textrm{with probability } \frac{1}{2},\\
-1,&\textrm{with probability } \frac{1}{2}.
\end{array}\right.
\end{eqnarray*}
For any $\epsilon>0$,
\begin{eqnarray*}
\|\mathcal{P}_{U_A}\mathcal{P}\mathcal{P}_{U_A}\mathcal{P}_{\Omega}(\Psi)\|_{\infty}\leq\epsilon
\end{eqnarray*}
holds with high probability as long as
\begin{eqnarray*}
\rank{A}\leq\frac{\epsilon^2n_2}{c_a\mu_1(A)\log{}n_1}.
\end{eqnarray*}
\end{lemma}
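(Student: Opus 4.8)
The plan is to control the sup-norm entrywise by fixing an arbitrary index $(a,b)$, writing $[\mathcal{P}_{U_A}\mathcal{P}\mathcal{P}_{U_A}\mathcal{P}_{\Omega}(\Psi)]_{ab}$ as an inner product against the fixed test matrix $\mathcal{P}_{U_A}\mathcal{P}\mathcal{P}_{U_A}(e_ae_b^T)$, and then showing this is a sum of independent, mean-zero, bounded random variables over which I can apply a Hoeffding / Bernstein-type concentration bound, followed by a union bound over all $mn$ entries. Concretely, since $\mathcal{P}$, $\mathcal{P}_{U_A}$ are symmetric, I would write
\begin{eqnarray*}
[\mathcal{P}_{U_A}\mathcal{P}\mathcal{P}_{U_A}\mathcal{P}_{\Omega}(\Psi)]_{ab}
=\langle \mathcal{P}_{U_A}\mathcal{P}\mathcal{P}_{U_A}\mathcal{P}_{\Omega}(\Psi),\,e_ae_b^T\rangle
=\langle \Psi,\,\mathcal{P}_{\Omega}\mathcal{P}_{U_A}\mathcal{P}\mathcal{P}_{U_A}(e_ae_b^T)\rangle.
\end{eqnarray*}
Denote the fixed matrix $H=\mathcal{P}_{\Omega}\mathcal{P}_{U_A}\mathcal{P}\mathcal{P}_{U_A}(e_ae_b^T)$. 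Then the quantity equals $\sum_{i,j}[\Psi]_{ij}[H]_{ij}$, a weighted Rademacher sum with deterministic weights, so Hoeffding gives $|\langle\Psi,H\rangle|\leq t$ with failure probability at most $2\exp(-t^2/(2\|H\|_F^2))$.

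The second step is to bound $\|H\|_F$ in terms of $\mu_1(A)$ and $r_A$. Here I would use that $\|\mathcal{P}_{\Omega}\|\leq 1$, $\|\mathcal{P}\|\leq 2$, and $\mathcal{P}_{U_A}$ is a projection, so
\begin{eqnarray*}
\|H\|_F\leq\|\mathcal{P}\|\,\|\mathcal{P}_{U_A}(e_ae_b^T)\|_F\leq 2\sqrt{\frac{\mu_1(A)r_A}{m}},
\end{eqnarray*}
invoking the same coherence estimate $\|\mathcal{P}_{U_A}(e_ie_j^T)\|_F^2\le\mu_1(A)r_A/m$ already established in the proof of Lemma~\ref{lem:papo}. (Strictly, I would first push the outer $\mathcal{P}_{U_A}$ and $\mathcal{P}$ through to reduce to a single application of $\mathcal{P}_{U_A}$ against $e_ae_b^T$, keeping the factor $\|\mathcal{P}\|\le2$.) Substituting this Frobenius bound into the Hoeffding tail and setting $t=\epsilon$ yields failure probability at most $2\exp(-\epsilon^2 m/(8\mu_1(A)r_A))$ per entry. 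A union bound over the $mn\le n_1^2$ entries then forces the overall failure probability to be polynomially small, provided $m/(\mu_1(A)r_A)\gtrsim\log n_1$, i.e. exactly when $r_A\le\epsilon^2 n_2/(c_a\mu_1(A)\log n_1)$ for a suitable constant $c_a$; here I would absorb constants and use $m\ge n_2$, $\log(mn)\le 2\log n_1$.

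The main obstacle, and the step demanding the most care, is matching the exponent in the per-entry tail bound to the stated rank condition after the union bound. The union bound costs a factor $\log(mn)\approx 2\log n_1$ in the exponent, and the Frobenius bound carries the factor $\mu_1(A)r_A/m$; reconciling these so that the final probability is at least $1-n_1^{-10}$ forces the precise shape $r_A=O(\epsilon^2 n_2/(\mu_1(A)\log n_1))$, and I must be careful that the constant $c_a$ can be taken consistent with its value in Lemma~\ref{lem:papo} so that both lemmas hold simultaneously under the single hypothesis \eqref{eq:succ}. A subtlety worth flagging is that $\Psi$ has \emph{independent} $\pm1$ entries (so the weighted-Rademacher Hoeffding argument is clean); the scaling-by-$\rho_0$ version used elsewhere for $S_0$ is not needed here, which is precisely why the sup-norm can be driven below any prescribed $\epsilon$.
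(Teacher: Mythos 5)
Your proposal is correct and follows essentially the same route as the paper's proof: both fix an entry, express it as a sum of independent mean-zero weighted signs (your test-matrix $H$ is exactly the paper's collection of weights $\delta_{ij}\langle\mathcal{P}_{U_A}\mathcal{P}\mathcal{P}_{U_A}(e_ie_j^T),e_{i_1}e_{j_1}^T\rangle$), bound the sum of squared weights by $4\mu_1(A)r_A/m$ via the coherence estimate $\|\mathcal{P}_{U_A}(e_ie_j^T)\|_F^2\leq\mu_1(A)r_A/m$ together with $\|\mathcal{P}\|\leq2$, and conclude with a union bound over the $n_1^2$ entries under the stated rank condition. The only cosmetic difference is that you invoke Hoeffding where the paper invokes Bernstein; since the same variance-type quantity controls both tails, the exponents and the admissible constant $c_a$ come out essentially the same.
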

\begin{proof}
Let $\xi_{ij}=[\Psi]_{ij}$ and
\begin{eqnarray*}
Q&=&\mathcal{P}_{U_A}\mathcal{P}\mathcal{P}_{U_A}\mathcal{P}_{\Omega}(\Psi)\\
&=&\mathcal{P}_{U_A}\mathcal{P}\mathcal{P}_{U_A}(\sum_{i,j}\delta_{ij}\xi_{ij}e_ie_j^T)\\
&=&\sum_{i,j}\delta_{ij}\xi_{ij}\mathcal{P}_{U_A}\mathcal{P}\mathcal{P}_{U_A}(e_ie_j^T).\\
\end{eqnarray*}
Then it can be seen that each entry of $Q$ is a sum of independent random variables:
\begin{eqnarray*}
[Q]_{i_1j_1} &=& \sum_{i,j}y_{ij} \textrm{ with }\\
y_{ij} &=&\delta_{ij}\xi_{ij}\langle\mathcal{P}_{U_A}\mathcal{P}\mathcal{P}_{U_A}(e_ie_j^T),e_{i_1}e_{j_1}^T\rangle.
\end{eqnarray*}
Note here that the variables $\delta_{ij}$'s are fixed and the randomness comes from $\xi_{ij}$'s.

It is easy to see that $\mathbb{E}(y_{ij})=0$. We have
\begin{eqnarray*}
|y_{ij}-\mathbb{E}(y_{ij})|&=&|\delta_{ij}\xi_{ij}\langle\mathcal{P}_{U_A}\mathcal{P}\mathcal{P}_{U_A}(e_ie_j^T),e_{i_1}e_{j_1}^T\rangle|\\
&=&|\delta_{ij}\xi_{ij}\langle\mathcal{P}_{U_A}(e_ie_j^T),\mathcal{P}\mathcal{P}_{U_A}(e_{i_1}e_{j_1}^T)\rangle|\\
&\leq&\|\mathcal{P}_{U_A}(e_ie_j^T)\|_F\|\mathcal{P}\mathcal{P}_{U_A}(e_{i_1}e_{j_1}^T)\|_F\\
&\leq&\|\mathcal{P}_{U_A}(e_ie_j^T)\|_F\|\mathcal{P}\|\|\mathcal{P}_{U_A}(e_{i_1}e_{j_1}^T)\|_F\\
&\leq&\frac{2u_1(A)r_A}{m}.
\end{eqnarray*}
We also have
\begin{eqnarray*}
\sum_{i,j}Var(y_{ij}) &=&\sum_{ij}|\delta_{ij}\langle\mathcal{P}_{U_A}\mathcal{P}\mathcal{P}_{U_A}(e_ie_j^T),e_{i_1}e_{j_1}^T\rangle|^2Var(\xi_{ij})\\
&=&\sum_{i,j}(\delta_{ij})^2|\langle\mathcal{P}_{U_A}\mathcal{P}\mathcal{P}_{U_A}(e_ie_j^T),e_{i_1}e_{j_1}^T\rangle|^2\\
&=&\sum_{i,j}(\delta_{ij})^2|\langle{}e_ie_j^T,\mathcal{P}_{U_A}\mathcal{P}\mathcal{P}_{U_A}(e_{i_1}e_{j_1}^T)\rangle|^2\\
&\leq&\sum_{i,j}|\langle{}e_ie_j^T,\mathcal{P}_{U_A}\mathcal{P}\mathcal{P}_{U_A}(e_{i_1}e_{j_1}^T)\rangle|^2\\
&=&\|\mathcal{P}_{U_A}\mathcal{P}\mathcal{P}_{U_A}(e_{i_1}e_{j_1}^T)\|_F^2\\\
&\leq&\|\mathcal{P}_{U_A}\mathcal{P}\|^2\|\mathcal{P}_{U_A}(e_{i_1}e_{j_1}^T)\|_F^2\\
&\leq&\frac{4\mu_1(A)r_A}{m}.
\end{eqnarray*}
Then the proof is finished by using Bernstein's inequality, which states that for a collection of uniformly bounded independent random variables $\{y_i\}_{i=1}^p$ with $|y_i-\mathbb{E}(y_i)|<c$,
\begin{eqnarray*}
Pr\left(\left|\sum_{i=1}^p(y_i-\mathbb{E}(y_i))\right|>t\right)\leq{}\exp\left(-\frac{0.5t^2}{\sum_{i=1}^pVar(y_i)+ct/3}\right).
\end{eqnarray*}
Thus we have
\begin{eqnarray*}
Pr(|[Q]_{i_1j_1}|>\epsilon)&\leq&\exp\left(-\frac{0.5\epsilon^2}{4\frac{\mu_1(A)r_A}{m}+2\epsilon\frac{\mu_1(A)r_A}{3m}}\right)\\
&\leq& \exp\left(-\frac{1.5\epsilon^2m}{(12+2\epsilon)\mu_1(A)r_A}\right).
\end{eqnarray*}
By union bound,
\begin{eqnarray*}
Pr\left(\|Q\|_{\infty}\leq\epsilon\right)&\geq{}&1-n_1^2\exp\left(-\frac{1.5\epsilon^2m}{(12+2\epsilon)\mu_1(A)r_A}\right)\\
&\geq&1-n_1^{-10},
\end{eqnarray*}
provided that $r_A\leq{}\epsilon^2n_2/(c_a\mu_1(A)\log{}n_1)$ with $c_a\geq104$.
\end{proof}
\subsection{Dual Conditions}
It remains to prove Theorem~\ref{thm:noiseless} by two steps:
\begin{itemize}
\item[1.]\textbf{Dual Conditions:} Identify the sufficient conditions for $(Z=A^{+}L_0,S=S_0)$ to be the unique optimal solution to the LRR problem \eqref{eq:lrr}.
\item[2.]\textbf{Dual Certificates:} Show that the dual conditions can be satisfied, that is to say, construct the dual certificates.
\end{itemize}
The dual conditions are presented in the following lemma.
\begin{lemma}\label{app:lem:dual}
Let the SVD of $A^{+}L_0$ be $U\Sigma{}V^T$. Suppose $\mathcal{P}_{U_A}(U_0)=U_0$ and $U_A\cap\Omega=\{0\}$. Then $(A^{+}L_0,S_0)$ is the unique optimal solution to \eqref{eq:lrr} if there exists a matrix $F$ that obeys
\begin{eqnarray*}
 \textrm{(a)}&&UV^T=\lambda{}A^T(sign(S_0)+F),\\
 \textrm{(b)}&&\mathcal{P}_{\Omega}(F) = 0,\\
 \textrm{(c)}&&\|\mathcal{P}_{\Omega^{\bot}}(F)\|_{\infty}<1.
\end{eqnarray*}
\end{lemma}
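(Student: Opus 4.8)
The plan is to establish optimality and uniqueness directly from convexity, using conditions (a)--(c) to control the first-order change of the objective along any feasible direction. Write $Z_0=A^{+}L_0$ and let $T$ be the tangent space of the nuclear norm at $Z_0=U\Sigma{}V^T$, so that $\mathcal{P}_{T}=\mathcal{P}_U+\mathcal{P}_V-\mathcal{P}_U\mathcal{P}_V$ and $\mathcal{P}_{T^{\bot}}=\mathcal{I}-\mathcal{P}_T$. Since $\mathcal{P}_{U_A}(U_0)=U_0$ forces $L_0$ into the column space of $A$, we have $AZ_0=AA^{+}L_0=L_0$, so $(Z_0,S_0)$ is feasible; any competitor can be written $(Z_0+H,S_0+G)$, and feasibility $A(Z_0+H)+(S_0+G)=X$ forces $G=-AH$. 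First I would record the two sharp convexity bounds coming from Lemma~\ref{app:lem:basic:1}: choosing the free part $W\in{}T^{\bot}$ optimally gives $\|Z_0+H\|_*\geq\|Z_0\|_*+\langle{}UV^T,H\rangle+\|\mathcal{P}_{T^{\bot}}(H)\|_*$, and choosing the free part supported on $\Omega^{\bot}$ optimally gives $\|S_0+G\|_1\geq\|S_0\|_1+\langle{}sign(S_0),G\rangle+\|\mathcal{P}_{\Omega^{\bot}}(G)\|_1$.

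Next I would substitute the dual certificate. Using (a) and $G=-AH$, we get $\langle{}UV^T,H\rangle=\lambda\langle{}sign(S_0)+F,AH\rangle=-\lambda\langle{}sign(S_0)+F,G\rangle$, so the two cross terms collapse to $\langle{}UV^T,H\rangle+\lambda\langle{}sign(S_0),G\rangle=-\lambda\langle{}F,G\rangle$. Condition (b) lets me replace $F$ by $\mathcal{P}_{\Omega^{\bot}}(F)$, whence $|\langle{}F,G\rangle|\leq\|\mathcal{P}_{\Omega^{\bot}}(F)\|_{\infty}\|\mathcal{P}_{\Omega^{\bot}}(G)\|_1$ by Lemma~\ref{lem:basic:2}. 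Collecting terms, the objective excess obeys
\[
\|Z_0+H\|_*+\lambda\|S_0+G\|_1-\|Z_0\|_*-\lambda\|S_0\|_1\geq\|\mathcal{P}_{T^{\bot}}(H)\|_*+\lambda\bigl(1-\|\mathcal{P}_{\Omega^{\bot}}(F)\|_{\infty}\bigr)\|\mathcal{P}_{\Omega^{\bot}}(G)\|_1.
\]
By (c) the factor $\beta:=1-\|\mathcal{P}_{\Omega^{\bot}}(F)\|_{\infty}$ is strictly positive, so the right-hand side is nonnegative and $(Z_0,S_0)$ is optimal.

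For uniqueness, suppose a competitor attains equality, so the excess is $0$. Then both nonnegative terms above vanish: $\mathcal{P}_{T^{\bot}}(H)=0$ (so $H\in{}T$) and $\mathcal{P}_{\Omega^{\bot}}(G)=0$ (so $G$ is supported on $\Omega$). Since every column of $AH$ lies in $\mathrm{span}(U_A)$, the matrix $G=-AH$ lies in the matrix space $U_A$; together with $G$ supported on $\Omega$ and the hypothesis $U_A\cap\Omega=\{0\}$, this gives $G=0$ and hence $AH=0$. The delicate point --- and the step I expect to be the main obstacle --- is that $AH=0$ with $H\in{}T$ does \emph{not} force $H=0$: $A$ has a nontrivial kernel whenever $d>\rank{A}$, so a naive $T\cap\ker(A)=\{0\}$ argument (the analogue of the RPCA case) is simply false here. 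I would instead exploit the geometry of $Z_0$. From $AH=0$ one gets $V_A^TH=0$, i.e. the columns of $H$ are orthogonal to $\mathrm{span}(V_A)$, while the columns of $Z_0=A^{+}L_0=V_A\Sigma_A^{-1}U_A^TL_0$ lie inside $\mathrm{span}(V_A)$. Hence $Z_0$ and $H$ have orthogonal column spaces, $Z_0^TH=0$.

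The final move is to upgrade the non-strict subgradient bound to a strict one on this residual direction. Expressing $Z_0+H$ in a common orthonormal column basis for the two orthogonal column spaces, $Z_0$ appears as a row-submatrix of $Z_0+H$; therefore the singular values of $Z_0+H$ dominate those of $Z_0$ entrywise, while $\|Z_0+H\|_F^2=\|Z_0\|_F^2+\|H\|_F^2$. If $H\neq0$ the strict Frobenius gap forces strict domination in at least one singular value, so $\|Z_0+H\|_*>\|Z_0\|_*$; since $G=0$ leaves the $\ell_1$ term unchanged, this contradicts the assumed equality. Hence $H=0$ and $G=0$, proving that $(A^{+}L_0,S_0)$ is the unique optimum, as claimed.
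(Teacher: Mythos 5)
Your proof is correct, and its skeleton is the same as the paper's: feasibility forces the sparse perturbation to be $G=-AH$, the subgradient inequalities plus condition (a) collapse the cross terms to $-\lambda\langle F,G\rangle$, conditions (b)--(c) bound this by $\lambda\|\mathcal{P}_{\Omega^{\bot}}(F)\|_{\infty}\|\mathcal{P}_{\Omega^{\bot}}(G)\|_1$ with a strict margin, and $U_A\cap\Omega=\{0\}$ kills any nonzero $G$. The genuine difference is the step you flagged as the main obstacle: perturbations $H\neq0$ with $AH=0$. The paper handles this case by citation --- after showing the objective strictly increases whenever $\Delta=A\Delta_1\neq0$, it invokes Theorem 4.1 of \citep{tpami_2013_lrr} to conclude $\|A^{+}L_0+\Delta_1\|_*>\|A^{+}L_0\|_*$ when $A\Delta_1=0$ and $\Delta_1\neq0$. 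You instead prove that fact from scratch: $AH=0$ gives $V_A^TH=0$, the columns of $Z_0=A^{+}L_0$ lie in $\mathrm{span}(V_A)$, hence $Z_0^TH=0$, and the stacked-matrix argument (Weyl monotonicity of singular values plus the strict Frobenius gap $\|Z_0+H\|_F^2=\|Z_0\|_F^2+\|H\|_F^2$) yields $\|Z_0+H\|_*>\|Z_0\|_*$. Your version is therefore self-contained exactly where the paper leans on an external result, and is in substance a re-derivation of that LRR theorem; the paper's route is shorter but leaves the reader to chase the reference. One small economy: the term $\|\mathcal{P}_{T^{\bot}}(H)\|_*$ you carry from the sharp nuclear-norm subgradient bound is never used --- your uniqueness argument runs entirely through $G$ and the kernel of $A$ --- so the plain bound $\|Z_0+H\|_*\geq\|Z_0\|_*+\langle UV^T,H\rangle$ would suffice.
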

\begin{proof}By standard convexity arguments~\citep{book:convex}, $(A^{+}L_0,S_0)$ is an optimal solution to \eqref{eq:lrr} if
\begin{eqnarray*}
0\in\partial\|A^{+}L_0\|_* -\lambda{}A^T\partial{}\|S_0\|_1.
\end{eqnarray*}
Note that $UV^T\in\partial\|A^{+}L_0\|_*$. Furthermore, (b) and (c) imply that $sign(S_0)+F\in\partial\|S_0\|_1$. Thus, the conditions (a), (b) and (c) are sufficient to conclude that $(A^{+}L_0,S_0)$ is an optimal (but may not be unique) solution to \eqref{eq:lrr}.

Next, we shall consider a feasible perturbation $(A^{+}L_0+\Delta_1,S_0-\Delta)$ and show that the objective strictly increases whenever $\Delta\neq0$. By $L_0+S_0=X=A(A^{+}L_0+\Delta_1)+S_0-\Delta$,
\begin{eqnarray*}
\Delta = A\Delta_1 &\textrm{and so}& \Delta\in{}\mathcal{P}_{U_A}.
\end{eqnarray*}
Let $H=-\mathcal{P}_{\Omega^{\bot}}(sign(\Delta))$. Then by Lemma~\ref{app:lem:basic:1}, $sign(S_0)+H$ is a subgradient of $\|S_0\|_1$.  By the convexity of the nuclear norm and $\ell_{1}$ norm,
\begin{align*}
&\|A^{+}L_0+\Delta_1\|_*+\lambda{}\|S_0-\Delta\|_1 \\
\geq&\|A^{+}L_0\|_*+\lambda{}\|S_0\|_1+\langle{}UV^T,\Delta_1\rangle-\lambda{}\langle{}sign(S_0)+H,\Delta\rangle\\
=&\|A^{+}L_0\|_*+\lambda{}\|S_0\|_1 + \langle{}UV^T-\lambda{}A^T{}sign(S_0),\Delta_1\rangle -\lambda{}\langle{}H,\Delta\rangle\\
=&\|A^{+}L_0\|_*+\lambda{}\|S_0\|_1 + \langle{}UV^T-\lambda{}A^T{}sign(S_0),\Delta_1\rangle +\lambda{}\|\mathcal{P}_{\Omega^{\bot}}(\Delta)\|_1\\
=&\|A^{+}L_0\|_*+\lambda{}\|S_0\|_1 + \lambda\langle{}A^TF,\Delta_1\rangle +\lambda{}\|\mathcal{P}_{\Omega^{\bot}}(\Delta)\|_1\\
=&\|A^{+}L_0\|_*+\lambda{}\|S_0\|_1+\lambda\langle{}F,A\Delta_1\rangle + \lambda{}\|\mathcal{P}_{\Omega^{\bot}}(\Delta)\|_1\\
=&\|A^{+}L_0\|_*+\lambda{}\|S_0\|_1+\lambda\langle{}F,\Delta\rangle + \lambda{}\|\mathcal{P}_{\Omega^{\bot}}(\Delta)\|_1\\
=&\|A^{+}L_0\|_*+\lambda{}\|S_0\|_1+\lambda\langle{}\mathcal{P}_{\Omega^{\bot}}(F),\mathcal{P}_{\Omega^{\bot}}(\Delta)\rangle + \lambda{}\|\mathcal{P}_{\Omega^{\bot}}(\Delta)\|_1\\
\geq&\|A^{+}L_0\|_*+\lambda{}\|S_0\|_1-\lambda\|\mathcal{P}_{\Omega^{\bot}}(F)\|_{\infty}\|\mathcal{P}_{\Omega^{\bot}}(\Delta)\|_1+\lambda{}\|\mathcal{P}_{\Omega^{\bot}}(\Delta)\|_1\\
=&\|A^{+}L_0\|_*+\lambda{}\|S_0\|_1+\lambda(1-\|\mathcal{P}_{\Omega^{\bot}}(F)\|_{\infty})\|\mathcal{P}_{\Omega^{\bot}}(\Delta)\|_1.
\end{align*}
By $\Delta\in\mathcal{P}_{U_A}$, $\|\mathcal{P}_{\Omega^{\bot}}(F)\|_{\infty}<1$ and the assumption $U_A\cap{}\Omega=\{0\}$, we have $\|\mathcal{P}_{\Omega^{\bot}}(\Delta)\|_1>0$. Thus, we have
\begin{eqnarray*}
\|A^{+}L_0\|_*+\lambda{}\|S_0\|_1+\lambda(1-\|\mathcal{P}_{\Omega^{\bot}}(F)\|_{\infty})\|\mathcal{P}_{\Omega^{\bot}}(\Delta)\|_1>\|A^{+}L_0\|_*+\lambda{}\|S_0\|_1
\end{eqnarray*}
strictly holds unless $\Delta=0$. As long as $A\Delta_1=0$, Theorem 4.1 of~\citep{tpami_2013_lrr} gives that $\|A^+L_0+\Delta_1\|_*>\|A^+L_0\|_*$ strictly holds unless $\Delta_1=0$. Hence, $(A^+L_0,S_0)$ is the unique optimal solution to the LRR problem \eqref{eq:lrr}.
\end{proof}
\subsection{Dual Certificates}
To construct a matrix $F$ which satisfies the dual conditions listed in Lemma~\ref{app:lem:dual}, we need the inverse of $P_{U_A}P_{\Omega^\bot}P_{U_A}$. The following lemma shows that $(P_{U_A}P_{\Omega^\bot}P_{U_A})^{-1}$ is well defined and has a small operator norm.
\begin{lemma}\label{app:lem:inverse}
If $\|P_{U_A}P_{\Omega}\|<1$, then the operator $\mathcal{P}_{U_A}\mathcal{P}_{\Omega^\bot}\mathcal{P}_{U_A}$ is an injection from $\mathcal{P}_{U_A}$ to $\mathcal{P}_{U_A}$, and its inverse operator is given by
\begin{eqnarray*}
\mathcal{I}+\sum_{i=1}^{\infty}(\mathcal{P}_{U_A}\mathcal{P}_{\Omega}\mathcal{P}_{U_A})^i.
\end{eqnarray*}
\end{lemma}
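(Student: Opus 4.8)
The plan is to recognize $\mathcal{P}_{U_A}\mathcal{P}_{\Omega^\bot}\mathcal{P}_{U_A}$, viewed as an operator on the subspace $\mathcal{P}_{U_A}$, as a perturbation of the identity and then invert it via a Neumann series. First I would use that $\mathcal{P}_{U_A}$ is a self-adjoint projection with respect to the Frobenius inner product, so that it restricts to the identity on its own range. For any $M$ lying in the subspace $\mathcal{P}_{U_A}$, substituting $\mathcal{P}_{\Omega^\bot}=\mathcal{I}-\mathcal{P}_{\Omega}$ and using $\mathcal{P}_{U_A}(M)=M$ gives
\begin{eqnarray*}
\mathcal{P}_{U_A}\mathcal{P}_{\Omega^\bot}\mathcal{P}_{U_A}(M)=M-\mathcal{P}_{U_A}\mathcal{P}_{\Omega}\mathcal{P}_{U_A}(M),
\end{eqnarray*}
so that on this subspace the operator equals $\mathcal{I}-\mathcal{G}$ with $\mathcal{G}=\mathcal{P}_{U_A}\mathcal{P}_{\Omega}\mathcal{P}_{U_A}$.

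The key step is to control $\|\mathcal{G}\|$. Here I would invoke the identity $\|\mathcal{P}_{U_A}\mathcal{P}_{\Omega}\mathcal{P}_{U_A}\|=\|\mathcal{P}_{U_A}\mathcal{P}_{\Omega}\|^2$ already used at the end of the proof of Lemma~\ref{lem:papo} (it follows from $\|\mathcal{B}^*\mathcal{B}\|=\|\mathcal{B}\|^2$ applied to $\mathcal{B}=\mathcal{P}_{\Omega}\mathcal{P}_{U_A}$ together with $\mathcal{P}_{\Omega}^2=\mathcal{P}_{\Omega}$). The hypothesis $\|\mathcal{P}_{U_A}\mathcal{P}_{\Omega}\|<1$ then yields $\|\mathcal{G}\|<1$, which is exactly the convergence condition for the Neumann series.

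Finally, since $\|\mathcal{G}\|<1$, the series $\sum_{i=0}^\infty\mathcal{G}^i$ converges in operator norm and furnishes a bounded inverse of $\mathcal{I}-\mathcal{G}=\mathcal{P}_{U_A}\mathcal{P}_{\Omega^\bot}\mathcal{P}_{U_A}$ on the subspace $\mathcal{P}_{U_A}$; peeling off the $i=0$ term as $\mathcal{I}$ recovers the stated formula $\mathcal{I}+\sum_{i=1}^\infty(\mathcal{P}_{U_A}\mathcal{P}_{\Omega}\mathcal{P}_{U_A})^i$. Invertibility immediately gives injectivity (if $\mathcal{P}_{U_A}\mathcal{P}_{\Omega^\bot}\mathcal{P}_{U_A}(M)=0$ for $M$ in the subspace, then applying the series forces $M=0$), while the leading factor $\mathcal{P}_{U_A}$ guarantees the range lies in $\mathcal{P}_{U_A}$, so the operator is indeed an injection from $\mathcal{P}_{U_A}$ into $\mathcal{P}_{U_A}$. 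The only subtlety worth care is that every operator must be interpreted as restricted to the subspace $\mathcal{P}_{U_A}$ --- where $\mathcal{P}_{U_A}$ acts as the identity --- rather than on the full matrix space; I expect this bookkeeping, rather than any hard estimate, to be the main thing to get right.
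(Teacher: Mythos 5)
Your proposal is correct and follows essentially the same route as the paper: both identify the operator on the subspace $\mathcal{P}_{U_A}$ as $\mathcal{I}-\mathcal{G}$ with $\mathcal{G}=\mathcal{P}_{U_A}\mathcal{P}_{\Omega}\mathcal{P}_{U_A}$, bound $\|\mathcal{G}\|=\|\mathcal{P}_{U_A}\mathcal{P}_{\Omega}\|^2<1$, and invert via the Neumann series. The only cosmetic difference is that the paper verifies the inverse by an explicit telescoping computation, whereas you invoke the standard Neumann series convergence theorem directly.
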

\begin{proof}
Let $\psi\equiv\|\mathcal{P}_{U_A}\mathcal{P}_{\Omega}\|$. By $\|\mathcal{P}_{U_A}\mathcal{P}_{\Omega}\mathcal{P}_{U_A}\|=\|\mathcal{P}_{U_A}\mathcal{P}_{\Omega}\|^2=\psi^2<1$, we have that $\mathcal{I}+\sum_{i=1}^{\infty}(\mathcal{P}_{U_A}\mathcal{P}_{\Omega}\mathcal{P}_{U_A})^i$ is well defined and has an operator norm not larger than $1/(1-\psi^2)$.

Note that
\begin{eqnarray*}
\mathcal{P}_{U_A}\mathcal{P}_{\Omega^\bot}\mathcal{P}_{U_A}&=&\mathcal{P}_{U_A}(\mathcal{I}-\mathcal{P}_{\Omega})\mathcal{P}_{U_A}\\
&=&\mathcal{P}_{U_A}(\mathcal{I}-\mathcal{P}_{U_A}\mathcal{P}_{\Omega}\mathcal{P}_{U_A}).
\end{eqnarray*}
Thus for any $M\in\mathcal{P}_{U_A}$ the following holds:
\begin{eqnarray*}
&&\mathcal{P}_{U_A}\mathcal{P}_{\Omega^\bot}\mathcal{P}_{U_A}(\mathcal{I}+\sum_{i=1}^{\infty}(\mathcal{P}_{U_A}\mathcal{P}_{\Omega}\mathcal{P}_{U_A})^i)(M)\\
&=&\mathcal{P}_{U_A}(\mathcal{I}-\mathcal{P}_{U_A}\mathcal{P}_{\Omega}\mathcal{P}_{U_A})(\mathcal{I}+\sum_{i=1}^{\infty}(\mathcal{P}_{U_A}\mathcal{P}_{\Omega}\mathcal{P}_{U_A})^i)(M)\\
&=&\mathcal{P}_{U_A}(\mathcal{I}+\sum_{i=1}^{\infty}(\mathcal{P}_{U_A}\mathcal{P}_{\Omega}\mathcal{P}_{U_A})^i-\mathcal{P}_{U_A}\mathcal{P}_{\Omega}\mathcal{P}_{U_A}-\sum_{i=2}^{\infty}(\mathcal{P}_{U_A}\mathcal{P}_{\Omega}\mathcal{P}_{U_A})^i)(M)\\
&=&\mathcal{P}_{U_A}(\mathcal{I}+\sum_{i=1}^{\infty}(\mathcal{P}_{U_A}\mathcal{P}_{\Omega}\mathcal{P}_{U_A})^i-\sum_{i=1}^{\infty}(\mathcal{P}_{U_A}\mathcal{P}_{\Omega}\mathcal{P}_{U_A})^i)(M)\\
&=&\mathcal{P}_{U_A}(M) = M.\hspace{3in}
\end{eqnarray*}
\end{proof}

The next lemma completes the construction of the dual certificates.
\begin{lemma}\label{app:lem:f}
Let
\begin{eqnarray*}
F = \mathcal{P}_{\Omega^\bot}\mathcal{P}_{U_A}\left(\mathcal{I}+\sum_{i=1}^{\infty}(\mathcal{P}_{U_A}\mathcal{P}_{\Omega}\mathcal{P}_{U_A})^i\right)
\left(\frac{1}{\lambda}(A^T)^+UV^T-\mathcal{P}_{U_A}(sign(S_0))\right),
\end{eqnarray*}
where $U$ and $V$ are the left and right singular vectors of $A^+L_0$, respectively. If the conditions stated in \eqref{eq:succ} are obeyed, then the above $F$ using $\lambda=1/\sqrt{n_1}$ satisfies (with high probability) the dual conditions (a), (b) and (c) in Lemma \ref{app:lem:dual}.
\end{lemma}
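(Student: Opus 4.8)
The plan is to verify the three dual conditions of Lemma~\ref{app:lem:dual} for the explicit $F$, disposing of (b) and (a) as algebraic consequences of the construction and reserving the real work for the sup-norm estimate (c). Throughout, abbreviate $G=\frac{1}{\lambda}(A^T)^+UV^T-\mathcal{P}_{U_A}(sign(S_0))$, so that $F=\mathcal{P}_{\Omega^\bot}\mathcal{P}_{U_A}\big(\mathcal{I}+\sum_{i\geq1}(\mathcal{P}_{U_A}\mathcal{P}_\Omega\mathcal{P}_{U_A})^i\big)(G)$. Since $(A^T)^+=U_A\Sigma_A^{-1}V_A^T$, the first piece of $G$ lies in the column space of $U_A$, and so does $\mathcal{P}_{U_A}(sign(S_0))$; hence $G\in\mathcal{P}_{U_A}$. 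Condition (b) is then immediate: $F$ is in the range of $\mathcal{P}_{\Omega^\bot}$, so $\mathcal{P}_\Omega(F)=0$.

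For (a) I would use three identities: $A^T\mathcal{P}_{U_A}=A^T$ (because $A^T=V_A\Sigma_AU_A^T$), $A^T(A^T)^+=V_AV_A^T$, and $V_AV_A^TU=U$. The last holds because $U_0\subset U_A$ forces $AA^+L_0=\mathcal{P}_{U_A}(L_0)=L_0$; consequently the row space $V$ of $A^+L_0$ equals the row space $V_0$ of $L_0$, while the column space of $A^+L_0=V_A\Sigma_A^{-1}U_A^TL_0$ sits inside the range of $V_A$, giving $\mathcal{P}_{V_A}(U)=U$. Because $G\in\mathcal{P}_{U_A}$ and the Neumann series keeps everything inside $\mathcal{P}_{U_A}$, Lemma~\ref{app:lem:inverse} (applicable since $\|\mathcal{P}_{U_A}\mathcal{P}_\Omega\|<1$ by Lemma~\ref{lem:papo}) yields $\mathcal{P}_{U_A}(F)=\mathcal{P}_{U_A}\mathcal{P}_{\Omega^\bot}\mathcal{P}_{U_A}\big(\mathcal{I}+\sum_{i\geq1}(\mathcal{P}_{U_A}\mathcal{P}_\Omega\mathcal{P}_{U_A})^i\big)(G)=G$. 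Then $\lambda A^T(sign(S_0)+F)=\lambda A^Tsign(S_0)+\lambda A^T\mathcal{P}_{U_A}(F)=\lambda A^Tsign(S_0)+\lambda A^TG$, and substituting $G$ and invoking the three identities collapses the right-hand side to $V_AV_A^TUV^T=UV^T$, which is exactly (a).

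It remains to prove (c). Since $F=\mathcal{P}_{\Omega^\bot}(F)$, this is the statement $\|F\|_\infty<1$, and I would split $F=F_1+F_2$ along the two pieces of $G$, bounding the deterministic contribution $F_1$ (from $\frac{1}{\lambda}(A^T)^+UV^T$) and the stochastic contribution $F_2$ (from $-\mathcal{P}_{U_A}(sign(S_0))$) each below $1/4$, so that $\|F\|_\infty<1/2$. For $F_2$, expand the Neumann series and bound $\|F_2\|_\infty\leq\sum_{i\geq0}\|(\mathcal{P}_{U_A}\mathcal{P}_\Omega\mathcal{P}_{U_A})^i\mathcal{P}_{U_A}(sign(S_0))\|_\infty$. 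Writing $sign(S_0)=\mathcal{P}_\Omega(\Psi)$ for a symmetric $\pm1$ matrix $\Psi$ and using the rearrangement $(\mathcal{P}_{U_A}\mathcal{P}_\Omega\mathcal{P}_{U_A})^i\mathcal{P}_{U_A}\mathcal{P}_\Omega=\mathcal{P}_{U_A}\mathcal{P}_\Omega(\mathcal{P}_{U_A}\mathcal{P}_\Omega)^{i-1}\mathcal{P}_{U_A}\mathcal{P}_\Omega$, each term takes the form $\mathcal{P}_{U_A}\mathcal{P}\mathcal{P}_{U_A}\mathcal{P}_\Omega(\Psi)$ with $\mathcal{P}=\mathcal{P}_\Omega(\mathcal{P}_{U_A}\mathcal{P}_\Omega)^{i-1}$ symmetric and $\|\mathcal{P}\|\leq\psi^{i-1}\leq1$ (the $i=0$ term being the case $\mathcal{P}=\mathcal{I}$). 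Lemma~\ref{lem:inf:pas0} therefore applies term by term; since both the boundedness and the variance proxies in that Bernstein estimate scale with $\|\mathcal{P}\|$, choosing thresholds that decay geometrically in $i$ renders the series summable, and a union bound gives $\|F_2\|_\infty<1/4$ with probability at least $1-n_1^{-10}$ under the rank condition in \eqref{eq:succ}.

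The deterministic term $F_1$ is the main obstacle, and it is where $\mu_2$ must vanish. Set $\Lambda=(A^T)^+UV^T=U_A\Sigma_A^{-1}(V_A^TU)V^T$, so that $\|F_1\|_\infty\leq\frac{1}{\lambda}\|\big(\mathcal{I}+\sum_{i\geq1}(\mathcal{P}_{U_A}\mathcal{P}_\Omega\mathcal{P}_{U_A})^i\big)(\Lambda)\|_\infty$. A naive entrywise bound on $[\Lambda]_{ij}=\langle U_A^Te_i,\Sigma_A^{-1}V_A^TUV^Te_j\rangle$ would introduce $\max_j\|V^Te_j\|_2$, i.e. the coherence of $V=V_0$, which is precisely $\mu_2(L_0)$. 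The crucial step is instead to bound $\|\Lambda\|_\infty\leq\|\Lambda^T\|_{2,\infty}=\max_i\|e_i^T(A^T)^+U\|_2$, where the orthonormal factor $V^T$ is swallowed by the operator norm and $\Sigma_A^{-1}V_A^TU$ is controlled by $\|\Sigma_A^{-1}\|$ alone; this gives the $\mu_2$-free estimate $\|\Lambda\|_\infty\leq\|\Sigma_A^{-1}\|\sqrt{\mu_1(A)r_A/m}$. Passing this through the Neumann operator requires a deterministic $\ell_{2,\infty}$-contraction for $\mathcal{P}_{U_A}\mathcal{P}_\Omega\mathcal{P}_{U_A}$, mirroring the operator-norm bound of Lemma~\ref{lem:papo}, so that the tail sums geometrically without reviving $\mu_2$. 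The delicate point is to check that, with $\lambda=1/\sqrt{n_1}$, the rank bound $\rank{A}\leq\epsilon^2n_2/(c_a\mu_1(A)\log n_1)$ is exactly strong enough to push $\frac{1}{\lambda}\|\Lambda\|_\infty$ below $1/4$; here the conditioning of $A$ together with the absence of $\mu_2$ is what delivers the advertised gain over RPCA.
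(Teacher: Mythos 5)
Your handling of conditions (a) and (b) is correct and essentially the paper's own argument (the identity $V_AV_A^TU=U$, which you derive directly from $U\subset\mathrm{range}(V_A)$, is what the paper cites as Theorem 4.3 of \citet{tpami_2013_lrr}), and your split of (c) into a deterministic piece $F_1$ and a stochastic piece $F_2$ is also the paper's. The genuine gap is your bound on $F_1$. The row-norm estimate $\norm{\Lambda}_{\infty}\leq\max_i\|e_i^T(A^T)^+U\|_2\leq\|\Sigma_A^{-1}\|\sqrt{\mu_1(A)r_A/m}$ is indeed $\mu_2$-free, but it is too weak by a factor of order $\sqrt{n}$, and the ``delicate point'' you defer --- checking that \eqref{eq:succ} pushes $\frac{1}{\lambda}\norm{\Lambda}_{\infty}$ below a constant --- fails. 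Concretely, for a column-normalized $A\in\mathbb{R}^{m\times d}$ one has $\|\Sigma_A^{-1}\|\leq\gamma_A\sqrt{r_A/d}$, so with $\lambda=1/\sqrt{n_1}$ and $d=n$ (the case of Algorithm~\ref{alg:mr}) your estimate gives $\frac{1}{\lambda}\norm{\Lambda}_{\infty}\lesssim\gamma_A r_A\sqrt{\mu_1(A)/n_2}$, and under $r_A\leq\epsilon^2n_2/(c_a\mu_1(A)\log n_1)$ this is of order $\gamma_A\epsilon^2\sqrt{n_2/\mu_1(A)}/(c_a\log n_1)$, which diverges with $n_2$; the bound stays below a constant only if $r_A=O(\sqrt{n_2})$, far stronger than \eqref{eq:succ}. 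The lost $\sqrt{n}$ is exactly the spreading of each row's mass over the $n$ columns, and recovering it requires coherence information about $V$ --- the very thing you are trying to discard. The paper does not discard it: it bounds the \emph{column-wise} norm $\|(A^T)^+UV^T\|_{2,\infty}$, packages it into the third coherence parameter $\mu_3^A(L_0)$ of \eqref{eq:u3a} (whose definition builds in the needed $\sqrt{r_0}\log n/n$ scaling), proves (c) for $\lambda=\sqrt{\mu_3^A(L_0)\gamma_A/(\mu_1(A)n_1)}$, and only then passes to $\lambda=1/\sqrt{n_1}$ via the empirical (not proven) claim that $\mu_3^A(L_0)\gamma_A/\mu_1(A)\approx1$ for well-conditioned $A$.

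Two further steps do not hold as claimed. First, even granting your $\Lambda$ bound, you must push a \emph{row-wise} $\ell_{2,\infty}$ estimate through the Neumann series, and the asserted ``deterministic $\ell_{2,\infty}$-contraction for $\mathcal{P}_{U_A}\mathcal{P}_{\Omega}\mathcal{P}_{U_A}$ mirroring Lemma~\ref{lem:papo}'' is unjustified there: because $\mathcal{P}_{U_A}$ and $\mathcal{P}_{\Omega}$ act on each column separately, the operator-norm bound of Lemma~\ref{lem:papo} automatically controls the column $\ell_{2,\infty}$ norm (this is what the paper uses, tacitly), but it does not control the row norm $\|M^T\|_{2,\infty}$, since $\mathcal{P}_{U_A}$ mixes rows and the crude row estimate inflates by $\|U_AU_A^Te_i\|_1\leq\sqrt{\mu_1(A)r_A}>1$ per iteration. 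Second, applying Lemma~\ref{lem:inf:pas0} term by term to $F_2$ with thresholds decaying geometrically in $i$ conflicts with that lemma's hypothesis, whose rank condition scales like the square of the threshold: with $\rank{A}$ fixed by \eqref{eq:succ} the lemma cannot deliver arbitrarily small thresholds, so you would have to reprove the Bernstein step with $\|\mathcal{P}\|$-dependent constants and truncate the series before taking a union bound. The paper sidesteps all of this by invoking Lemma~\ref{lem:inf:pas0} exactly once, with $\mathcal{P}$ equal to the full Neumann operator --- which is precisely why that lemma is stated for $\|\mathcal{P}\|\leq2$ rather than $\|\mathcal{P}\|\leq1$, and why the sparsity condition $\rho_0<0.5-\epsilon$ enters.
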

\begin{proof}
\noindent\textbf{(a):} We have
\begin{align*}
&\lambda{}A^T(sign(S_0)+F)\\
=&\lambda{}A^Tsign(S_0)+\lambda{}A^T\mathcal{P}_{U_A}(F)\\
=&\lambda{}A^Tsign(S_0)+\lambda{}A^T\mathcal{P}_{U_A}\mathcal{P}_{\Omega^\bot}\mathcal{P}_{U_A}(\mathcal{I}+\sum_{i=1}^{\infty}(\mathcal{P}_{U_A}\mathcal{P}_{\Omega}\mathcal{P}_{U_A})^i)(\frac{1}{\lambda}(A^T)^+UV^T-\mathcal{P}_{U_A}(sign(S_0)))\\
=&\lambda{}A^Tsign(S_0)+\lambda{}A^T(\frac{1}{\lambda}(A^T)^+UV^T-\mathcal{P}_{U_A}(sign(S_0)))\\
=&\lambda{}A^Tsign(S_0)+V_AV_A^TUV^T-\lambda{}A^T\mathcal{P}_{U_A}(sign(S_0))\\
=&\lambda{}A^Tsign(S_0)-\lambda{}A^T\mathcal{P}_{U_A}(sign(S_0))+V_AV_A^TUV^T\\
=&V_AV_A^TUV^T=UV^T,
\end{align*}
where the last equality follows from Theorem 4.3 of~\citep{tpami_2013_lrr}.\\

\noindent \textbf{(b):} It is easy to verify that $\mathcal{P}_{\Omega}(F)=0$.\\

\noindent\textbf{(c):} Let $\mathcal{P}=\mathcal{I}+\sum_{i=1}^{\infty}(\mathcal{P}_{U_A}\mathcal{P}_{\Omega}\mathcal{P}_{U_A})^i$ and $F={}\mathcal{P}_{\Omega^\bot}(F_1-F_2)$, where
\begin{align*}
F_1&=\mathcal{P}_{U_A}\mathcal{P}\mathcal{P}_{U_A}(\frac{1}{\lambda}(A^T)^+UV^T),\hspace{0.3in} F_2 = \mathcal{P}_{U_A}\mathcal{P}\mathcal{P}_{U_A}(sign(S_0)).
\end{align*}
In the following, we shall bound the sup-norm of each term individually.

The proof for $\|F_2\|_{\infty}$ needs to access the distribution of $sign(S_0)$. When the signs of the nonzero entries of $S_0$ are Bernoulli $\pm1$ values, i.e., $sign(S_0)=\mathcal{P}_{\Omega}(\Psi)$ with $\Psi$ being a random sign matrix as in Lemma \ref{lem:inf:pas0}, we have indeed proven
\begin{align*}
\|F_2\|_{\infty}=\|\mathcal{P}_{U_A}\mathcal{P}\mathcal{P}_{U_A}\mathcal{P}_{\Omega}(\Psi)\|_{\infty}<\epsilon,
\end{align*}
provided that $\|\mathcal{P}\|\leq1/(1-\rho_0-\epsilon)\leq2$, which follows from the condition of $\rho_0<0.5-\epsilon$.

So it remains to prove that
\begin{eqnarray*}
\|F_1\|_{\infty}<1-\epsilon.
\end{eqnarray*}
This seems easy because we could set $\lambda\rightarrow{}+\infty$. Nevertheless, to prove our main result, Theorem~\ref{thm:noiseless}, with $\lambda=1/\sqrt{n_1}$ (which is a good choice in general), one essentially needs to establish an accurate bound for $\|F_1\|_{\infty}$. Even more, the golfing scheme widely adopted by previous literatures is indeed not easy to work with in our setting. Fortunately, we can make use of the particular structure of $F_1$ and devise a simple approach to accomplish the proof. Our idea is based on the following observation: For any matrix $Q$, the $(i_1,j_1)$th entry of the matrix $\mathcal{P}_{U_A}\mathcal{P}_{\Omega}(Q)$ is
\begin{eqnarray*}
[\mathcal{P}_{U_A}\mathcal{P}_{\Omega}(Q)]_{i_1j_1}=\sum_{i,j}\delta_{ij}[Q]_{ij}\langle{}e_ie_j^T,\mathcal{P}_{U_A}(e_{i_1}e_{j_1}^T)\rangle=\sum_{i}\delta_{ij_1}[Q]_{ij_1}[U_AU_A^T]_{ii_1},
\end{eqnarray*}
which reveals the fact that the absolute value of $[F_1]_{i_1j_1}$ closely relates to the length of the $j_1$th column of $(A^T)^+UV^T$. So it may not lose much accuracy to use the relaxation of $\|\cdot\|_{\infty}\leq\|\cdot\|_{2,\infty}$. For the sake of consistency, we use the $\ell_{2,\infty}$ norm to define as follows the third coherence parameter of $L_0$, associating with a dictionary matrix $A$:
\begin{defn}
For $L_0\in\mathbb{R}^{m\times{}n}$ of rank $r_0$, its third coherence parameter, associating with a non-orthonormal, column-wisely unit-normed dictionary matrix $A$ which also satisfies $\mathcal{P}_{U_A}(U_0)=U_0$, is defined as
\begin{eqnarray}\label{eq:u3a}
\mu_3^A(L_0)= \frac{n^2(\|(A^T)^+UV^T\|_{2,\infty})^2}{(\log{}n)^2r_0\gamma_A},
\end{eqnarray}
where $U$ and $V$ are the left and right singular vectors of $A^+L_0$, respectively, and $\gamma_A$ is the condition number of the matrix $A$.
\end{defn}
\begin{figure}
\begin{center}
\includegraphics[width=0.98\textwidth]{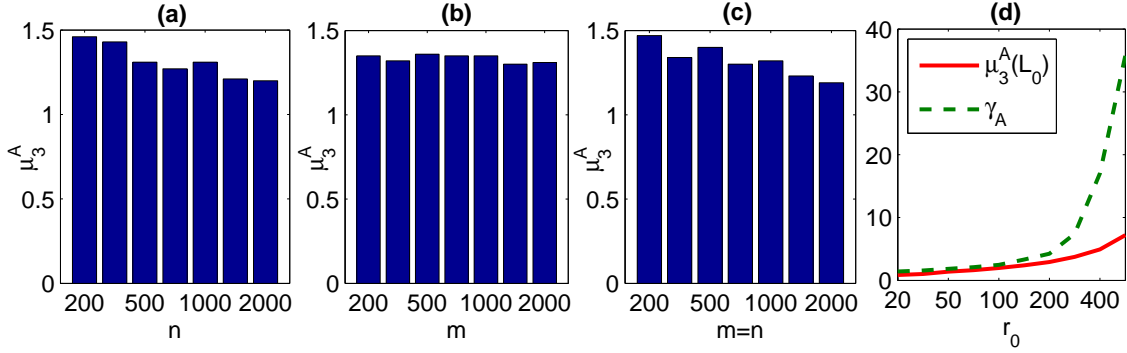}\vspace{-0.2in}
\end{center}
\caption{Investigating the properties of $\mu_3^A(L_0)$. (a) $m=500$ and $r_0=50$ are fixed, while $n$ is varying. (b) $n=500$ and $r_0=50$ are fixed, while $m$ is varying. (c) $r_0=50$ is fixed, while $m$ and $n$ are varying ($m=n$). (d) $m=n=500$ are fixed, while $r_0$ is varying. In these experiments, the dictionary $A$ is $\mathcal{P}_{U_0}(R)$ with normalized columns, where $R$ is an $m\times{}n$ random Gaussian matrix. The numbers shown in above figures are averaged from 10 random trials.}\label{fig:u3a}
\end{figure}

Figure~\ref{fig:u3a} demonstrates some properties about this particular coherence parameter, $\mu_3^A$. It can be seen that $\mu_3^A$ is approximately a numerical constant equaling to 1, as long as the rank is not too high such that the dictionary matrix $A$ is well-conditioned.

By Lemma~\ref{lem:papo}, $\|\mathcal{P}\|\leq1/(1-\rho_0-\epsilon)$. By \eqref{eq:u3a},
\begin{eqnarray*}
\|(A^T)^+UV^T)\|_{2,\infty}\leq\frac{\sqrt{\gamma_A\mu_3^A(L_0)r_0}\log{}n}{n}.
\end{eqnarray*}
Thus we have
\begin{eqnarray*}
\|F_1\|_{\infty}&=&\|U_AU_A^T\mathcal{P}\mathcal{P}_{U_A}(\frac{1}{\lambda}(A^T)^+UV^T)\|_{\infty}\\
&\leq&\max_{i}\|e_i^TU_A\|_2\|U_A^T\mathcal{P}\mathcal{P}_{U_A}(\frac{1}{\lambda}(A^T)^+UV^T)\|_{2,\infty}\\
&\leq&\sqrt{\frac{\mu_1(A)r_A}{m}}\|\mathcal{P}\|\|(\frac{1}{\lambda}(A^T)^+UV^T)\|_{2,\infty}\\
&\leq&\frac{\sqrt{\mu_1(A)\mu_3^A(L_0)\gamma_Ar_Ar_0}\log{}n}{\lambda\sqrt{m}n(1-\rho_0-\epsilon)}.
\end{eqnarray*}
By $r_0\leq{}r_A\leq\epsilon^2n_2/(c_a\mu_1(A)\log{}n_1)$ and setting $\lambda=\sqrt{\mu_3^A(L_0)\gamma_A/(\mu_1(A)n_1)}$,
\begin{eqnarray*}
\|F_1\|_{\infty}\leq\frac{\epsilon^2n_2\sqrt{n_1}\log{n}}{c_a(1-\rho_0-\epsilon)\sqrt{m}n\log{n_1}}\leq\frac{\epsilon^2}{c_a(1-\rho_0-\epsilon)}.
\end{eqnarray*}
Since $\mu_3^A(L_0)\gamma_A/\mu_1(A)\approx1$ (provided that $A$ is well-conditioned), we claim $\lambda=1/\sqrt{n_1}$ for the sake of simplicity\footnote{This detail also suggests that $\lambda=1/\sqrt{n_1}$ may not be the ``best'' choice.}.

Now the dual condition $\|\mathcal{P}_{\Omega^{\bot}}(F)\|_{\infty}<1$ is proved by
\begin{eqnarray*}
&&\|F\|_{\infty}<1,\\
&\leftarrow&\frac{\epsilon^2}{c_a(1-\rho_0-\epsilon)}<1-\epsilon,\\
&\leftarrow&\rho_0<1-2\epsilon,\\
&\leftarrow&\rho_0<0.5-\epsilon.
\end{eqnarray*}
We claim $\rho_0<0.5-\epsilon$ instead of $\rho_0<1-2\epsilon$ because Lemma~\ref{lem:inf:pas0} requires $\|\mathcal{P}\|\leq2$, which follows from $\rho_0<0.5-\epsilon$.
\end{proof}
\section{Experiments}\label{sec:exp}
Our main result, Theorem~\ref{thm:noiseless}, is useful in both supervised and unsupervised environments. For the fair of comparison, in the experiments of this paper we shall focus on demonstrating the superiorities of our unsupervised Algorithm~\ref{alg:mr} over RPCA.

\subsection{Results on Randomly Generated Matrices}
We first verify the effectiveness of our Algorithm \ref{alg:mr} on randomly generated matrices. We generate a collection of $200\times1000$ data matrices according to the model of $X=\mathcal{P}_{\Omega^\bot}(L_0)+\mathcal{P}_{\Omega}(S_0)$: $\Omega$ is a support set chosen at random; $L_0$ is created by sampling 200 data points from each of 5 randomly generated subspaces, and its values are normalized such that $\|L_0\|_{\infty}=1$; $S_0$ is consisting of random values from Bernoulli $\pm1$. The dimension of each subspace varies from 1 to 20 with step size 1, and thus the rank of $L_0$ varies from 5 to 100 with step size 5. The fraction $|\Omega|/(mn)$ varies from 2.5\% to 50\% with step size 2.5\%. For each pair of rank and support size $(r_0,|\Omega|)$, we run 10 trials, resulting in a total of 4000 ($20\times20\times10$) trials.
\begin{figure}[h!]
\begin{center}
\includegraphics[width=0.95\textwidth]{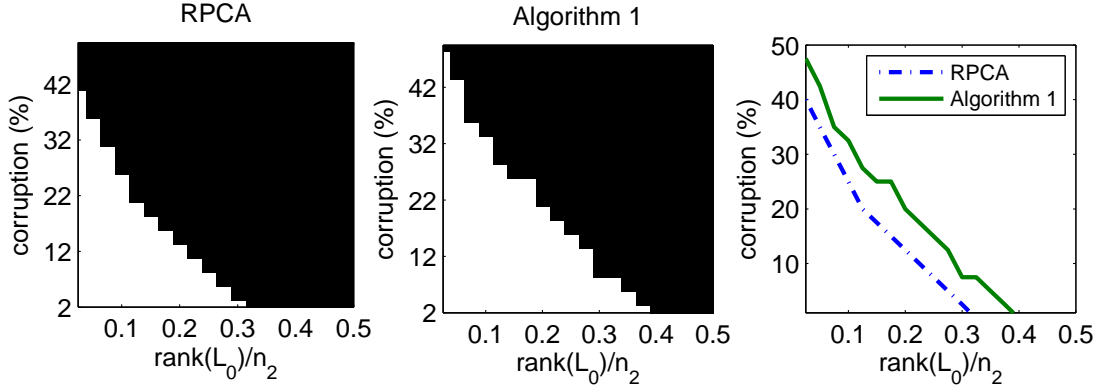}\vspace{-0.3in}
\end{center}
\caption{Algorithm \ref{alg:mr} vs RPCA on recovering randomly generated matrices, both using $\lambda=1/\sqrt{n_1}$. A curve shown in the third subfigure is the boundary for a method to be successful --- The recovery is successful for any pair $(r_0/n_2,|\Omega|/(mn))$ that locates below the curve. Here, the success is in a sense that $\|\hat{L}_0-L_0\|_F<0.05\|L_0\|_F$, where $\hat{L}_0$ denotes an estimate of $L_0$.}\label{fig:recover}
\end{figure}

Figure~\ref{fig:recover} compares our Algorithm~\ref{alg:mr} to RPCA, both using $\lambda=1/\sqrt{n_1}$. It can be seen that the learnt dictionary matrix works distinctly better than the identity dictionary adopted by RPCA. Namely, the success area (i.e., the area of the white region) of our algorithm is 46\% wider than that of RPCA! One may have noticed that RPCA owns a region to be exactly successful. This is because in these experiments the coherence parameters are not too large, namely $\mu_1(L_0)\leq3.5$ and $\mu_2(L_0)\leq13.7$. Whenever $\mu_2$ reaches the upper bound $n$, e.g., the example shown in Figure~\ref{fig:demo}, the success region of RPCA will vanish.
\subsection{Results on Corrupted Motion Sequences}
We now experiment with 11 additional sequences attached to the Hopkins155~\citep{hopkin155} database. In those sequences, about 10\% of the entries in the data matrix of trajectories are unobserved (i.e., missed) due to visual occlusion. We replace each missing entry with a number from Bernoulli $\pm1$, resulting in a collection of corrupted trajectory matrices for evaluating the effectiveness of matrix recovery algorithms. We perform subspace clustering on both the corrupted trajectory matrices and the recovered versions, and use the clustering error rates produced by existing subspace clustering methods as the evaluation metrics. We consider three state-of-the-art subspace clustering methods: Shape Interaction Matrix (SIM)~\citep{ijcv_1998_factor}, Low-Rank Representation with $A=X$~\citep{icml_2010_lrr} (which is referred to as ``LRRx'') and Sparse Subspace Clustering (SSC)~\citep{cvpr_2009_ssc}.
\begin{table}[h!]
\caption{Clustering error rates (\%) on 11 corrupted motion sequences.}\label{tb:motion}
\begin{center}
\begin{tabular}{|c|c|c|c|c|c|c|}\hline
                            & Mean  & Median & Maximum & Minimum & Std. & Time (sec.)\\\hline
SIM                         &29.19  & 27.77  & 45.82   & 12.45   &11.74     &0.07\\
RPCA + SIM                  &14.82  & 8.38   & 45.78   & 0.97    &16.23     &9.96\\
Algorithm~\ref{alg:mr} + SIM  &8.74   & 3.09   & 42.61   & 0.23    &12.95     &11.64\\\hline
LRRx                        &21.38  & 22.00  & 56.96   & 0.58    &17.10     &1.80\\
RPCA + LRRx                 &10.70  & 3.05   & 46.25   & 0.20    &15.63     &10.75\\
Algorithm~\ref{alg:mr} + LRRx &7.09   & 3.06   & 32.33   & 0.22    &10.59     &12.11\\\hline
SSC                         &22.81  & 20.78  & 58.24   & 1.55    &18.46     &3.18\\
RPCA + SSC                  &9.50   & 2.13   & 50.32   & 0.61    &16.17     &12.51\\
Algorithm~\ref{alg:mr} + SSC  &5.74   & 1.85   & 27.84   & 0.20    &8.52      &13.11\\\hline
\end{tabular}
\end{center}
\end{table}

Table~\ref{tb:motion} shows the error rates of various algorithms. Without the preprocessing of matrix recovery, all the subspace clustering methods fail to accurately categorize the trajectories of motion objects, producing error rates higher than 20\%. This illustrates that it is important for motion segmentation to correct the gross corruptions possibly existing in the data matrix of trajectories. By using RPCA ($\lambda=1/\sqrt{n_1}$) to correct the corruptions, the clustering performances of all considered methods are improved dramatically. For example, the error rate of SSC is reduced from 22.9\% to 9.5\%. By choosing a better dictionary (than the identity) for LRR ($\lambda=1/\sqrt{n_1}$), the error rates can be reduced again, namely from $9.5\%$ to $5.7\%$, which is a 40\% improvement. These results verify the effectiveness of our dictionary learning strategy in realistic environments.
\section{Conclusion and Future Work}\label{sec:con}
In this paper, we studied the problem of disentangling the low-rank ($L_0$) and sparse ($S_0$) components in a given data matrix. Whenever the low-rank component owns some extra structures, the state-of-the-art RPCA method might fail even if $L_0$ is strictly low-rank. As a typical example, consider the case where there is a mixture structure of multiple subspaces underlying $L_0$. When the subspace (i.e., cluster) number goes large, the second coherence parameter will enlarge and thus the performance of RPCA degrades. To overcome the challenges arising from coherent data, theoretically, one needs to capture the extra structures that produce high coherence. Nevertheless, such a strategy suffers several practical issues and is therefore infeasible. In sharp contrast, it is much simpler to solve the problem by LRR: When the dictionary matrix $A$ in LRR satisfies certain conditions, namely $A$ is low-rank and $U_0\subset{}U_A$, LRR can avoid the second coherence parameter that has potential to be large. Furthermore, we established a heuristic algorithm that utilizes RPCA to approximately pursue a qualified dictionary. Experimental results showed that our algorithm performed better than RPCA. However, there still remain several problems for future work.
\begin{itemize}\vspace{-0.08in}
\item[$\diamond$] By $AZ^*=L_0$, the column space of the dictionary $A$ approximately has the same properties as $L_0$, and thus, roughly, $\mu_1(A)\approx\mu_1(L_0)$. So this paper still needs to assume that the first coherence parameter $\mu_1$ is small and only addresses the cases where the second coherence parameter $\mu_2$ might be large. In some domains such as the text documents, both the row space and column space can own some clustering structures, and thus both $\mu_1$ and $\mu_2$ can be large. New models are required to well handle such coherent data.\vspace{-0.08in}
\item[$\diamond$] It is possible to prove that Algorithm~\ref{alg:mr} is superior over RPCA in theory, because the conditions (i.e., $A$ is low-rank and $U_0\subset{}U_A$) required by Algorithm~\ref{alg:mr} to succeed are weaker than $A=L_0$. It is significant to accurately identify in which conditions RPCA can produce a solution that is able to meet those conditions.\vspace{-0.08in}
\item[$\diamond$] While theorem~\ref{thm:noiseless} points out a generic direction for learning the dictionary matrix in LRR, the specific learning procedure is not unique and our Algorithm~\ref{alg:mr} is not exclusive either. For example, one may drive some kind of optimization framework to jointly compute the variables $A$ and $Z$.
\end{itemize}

\vspace{-0.2in}

\section*{Acknowledgement}
Guangcan Liu is a Postdoctoral Researcher supported by NSF-DMS0808864, NSF-SES1131848,  NSF-EAGER1249316, AFOSR-FA9550-13-1-0137, and ONR-N00014-13-1-0764. Ping Li is also partially supported by  NSF-III1360971 and NSF-BIGDATA1419210.

\vspace{-0.1in}
\appendix
\section{List of Notations}\label{sec:app:notations}
\begin{center}
\begin{tabular}{ll}
$(\cdot)^{+}$ & Moore-Penrose pseudoinverse of a matrix. \\
$\otimes$ & Kronecker product.\\
$e_i$ & The $i$th standard basis. \\
$[\cdot]_{ij}$ & The $(i,j)$th entry of a matrix. \\
$X\in\mathbb{R}^{m\times{}n}$             &   The observed data matrix.\\
$A$, $U_A\Sigma_AV_A^T$                   &   The dictionary matrix, and its SVD\\
$L_0$, $U_0\Sigma_0V_0^T$                 &   The ground truth of the data matrix, and its SVD.\\
$S_0\in\mathbb{R}^{m\times{}n}$           &   The ground truth of the corruption matrix.\\
$U\Sigma{}V^T$ & The SVD of $A^{+}L_0$.\\
$r_0,r_A$ & The ranks of $L_0$ and $A$.\\
$\gamma_A$ & The condition number of $A$.\\
$\mu_1,\mu_2,\mu_3$ & The first, second and third coherence parameters of a matrix.\\
$\mu_3^A(\cdot)$ & The third coherence parameter of a matrix, associating with  $A$.\\
$n_1,n_2$ & $n_1=\max(m,n)$,$n_2 = \min(m,n)$.\\
$\Omega$  &   Locations of the nonzero entries of $S_0$.\\
$\Omega^c$ & The complement of $\Omega$.\\
$\mathcal{P}_{U_0}$, $\mathcal{P}_{V_0}$ & The projections onto the space spanned by $U_0$ (resp. $V_0$).\\
$\mathcal{P}_{\Omega}$, $\mathcal{P}_{\Omega^\bot}$ & The projections onto the space of matrices supported on $\Omega$ (resp. $\Omega^c$).\\
$\Id,\mathcal{I}$  & The identity matrix and the identity operator.\\
$|\Omega|$ & The cardinality of $\Omega$, i.e., the number of nonzero entries in $S_0$. \\
$sign(\cdot)$ &The signum function. \\
$\partial$ & The subgradient of a function. \\
$\|\cdot\|_2$ & The $\ell_2$ norm of a vector.\\
$\langle{}\cdot\rangle$ & The inner product of two matrices or vectors. \\
$\|\cdot\|$ & The operator norm or 2-norm of a matrix, i.e., largest singular value.\\
$\|\cdot\|_*$ & The nuclear norm of a matrix.\\
$\|\cdot\|_F$ & The Frobenius norm of a matrix.\\
$\|\cdot\|_{2,\infty}$ & The $\ell_{2,\infty}$ norm, i.e., the largest $\ell_2$ norm of the columns of a matrix.\\
$\|\cdot\|_1$ & The $\ell_1$ norm of a matrix seen as a long vector. \\
$\|\cdot\|_{\infty}$ & The sup-norm of a matrix seen as a long vector.\\
$Ber(\rho)$ & A Bernoulli distribution with expected value $\rho$ and variance $\rho(1-\rho)$.
\end{tabular}
\end{center}
\section{Why Does $\mu_2$ Increase with the Cluster Number?}\label{app:why}
\subsection{Zipf's Law}
When the data points are sampled from a low-rank subspace \emph{uniformly at random}, it has been proven by~\citep{Candes:2009:math} that the first and second coherence parameters are bounded. Namely, $\mu_1(L_0)\leq{}c$ and $\mu_2(L_0)\leq{}c$ for some numerical constant $c$ independent of the characteristics of $L_0$. Although correct, such a property is not enough to interpret the phenomenon that the coherence parameters increase with the cluster number underlying $L_0$. Hence, it is necessary to establish a more accurate rule to characterize the coherence parameters. Through extensive experiments, we find that the first and second coherence parameters actually follow the well-known Zipf's law. More precisely, if the data points (which form the column vectors of $L_0\in\mathbb{R}^{m\times{}n}$) are uniformly sampled from a $r_0$-dimensional subspace, then, roughly, the logarithm of coherence is inversely proportional to the logarithm of $1+r_0$. That is,
\begin{eqnarray}\label{eq:zipf}
\log(\mu_1(L_0))\log(1+r_0)\approx{}c_1 &\textrm{and}&  \log(\mu_2(L_0))\log(1+r_0)\approx{}c_2,
\end{eqnarray}
where $c_1$ and $c_2$ are two constants. The results in Figure~\ref{fig:zipf} verify the above Zipf's law. Note that the Zipf's law \eqref{eq:zipf} can also induce the boundedness property proved by~\citep{Candes:2009:math}. Namely, \eqref{eq:zipf} approximately gives that $\mu_1(L_0)\leq{}\exp(c_1/\log2)$ and $\mu_2(L_0)\leq{}\exp(c_2/\log2)$.

The above Zipf's law suggests that the coherence must be inversely proportional to the rank of data. This is intuitively interpretable. Let $y_j=[U_0]_{ij}$ and $C_{r_0}=\|U_0^Te_i\|_2^2=\sum_{j=1}^{r_0}y_j^2$. Then it can be seen that $C_{r_0}$ is the squared Euclidean length of the first $r_0$ components of a unit vector distributed on the $m$-dimensional unit sphere. With these notations, it can be seen that $\mu_1$ is the largest order statistic of $C_{r_0}$ divided by the expectation of $C_{r_0}$:
\begin{eqnarray*}
\mu_1(L_0)=\frac{m}{r_0}\max_i\|U_0^Te_i\|_2^2 = \frac{\max_i\|U_0^Te_i\|_2^2}{\frac{r_0}{m}}=\frac{\max(C_{r_0})}{\mathbb{E}(C_{r_0})}.
\end{eqnarray*}
Now it is unfolded that the first (and second) coherence parameter of a matrix with rank $r_0$ is actually some kind of uncertainty of the first $r_0$ components of a unit-normed, $m$-dimensional random vector. Thus if $r_0=m$ (i.e., $L_0$ is full rank), then the uncertainty vanishes and $\mu_1(L_0)=1$. Similarly if $r_0=1$, the uncertainty measured by $\max(C_{r_0})/\mathbb{E}(C_{r_0})$ is as high as that of a single random number.

The Zipf's law \eqref{eq:zipf} is useful, because it provides us a trackable approach to estimate the coherence parameters when the data points are \emph{not} uniformly sampled, as will be shown in the next section.
\begin{figure}
\begin{center}
\includegraphics[width=0.88\textwidth]{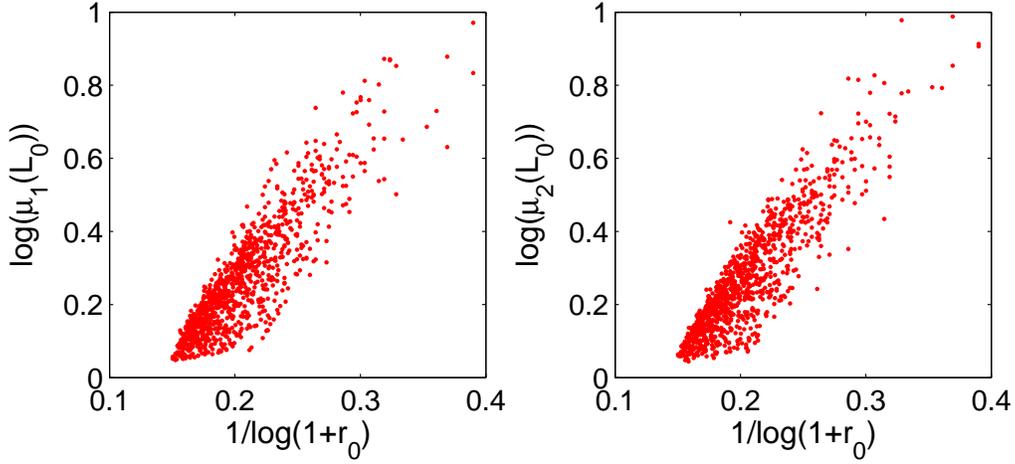}
\caption{Verifying the Zipf's law, using one million randomly generated matrices. The height $m$ and width $n$ of $L_0$ are random integers from the range 100 to 1000. The rank of $L_0$ is set as $r_0=h\min\{m,n\}$, where $h$ is a random number from the interval (0.1,0.9). For the clarity of viewing, we randomly select 10,000 out of one million simulation results to show. For all one million simulations, we have calculated that $\mathbb{E}(c_1)=\mathbb{E}(c_2)=1.23$ and $Std(c_1)=Std(c_2)=0.41$.}\label{fig:zipf}
\end{center}
\end{figure}
\subsection{An Explanation to the $\mu_2$-phenomenon}
Ideally, if the values in $U_0$ and $V_0$ are perfectly spreading out, namely $[U_0]_{ij}=[U_0]_{i_1j_1}$ and $[V_0]_{ij}=[V_0]_{i_1j_1}, \forall{}i,j,i_1,j_1$, then $\mu_1(L_0)=\mu_2(L_0)=1$. However, this is unlikely for $\mu_2(L_0)$ to happen, as it is provable that the row projector $V_0V_0^T$, which is also known as Shape Interaction Matrix (SIM) in subspace clustering, measures the subspace membership of the data points~\citep{ijcv_1998_factor,tpami_2013_lrr}. More precisely, if the data points in $L_0$ are sampled from $k$ number of \emph{independent} subspaces, saying $L_0=[L_0^{(1)},\cdots,L_0^{(k)}]$, where $L_0^{(i)}$ with SVD $U_i\Sigma_iV_i^T$ is a matrix of data points from the $i$th subspace, then $V_0$ is equivalent to a block-diagonal matrix that has nonzero entries only on $k$ number of blocks:
\begin{eqnarray*}
V_0\sim\left[\begin{array}{cccc}
V_1&0&0&0\\
0&V_2&0&0\\
0&0&\ddots&0\\
0&0&0&V_k\\
\end{array}\right].
\end{eqnarray*}
In this case, it is demonstrable that the second coherence parameter $\mu_2(L_0)$ depends on the cluster number $k$. For the convenience of analysis, we assume that the dimensions of all subspaces are equal, i.e., $\rank{L_0^{(i)}}=r_0/k,\forall{},i=1,\cdots,k$, and the sampling in each subspace is \emph{uniform}. Then the Zipf's law \eqref{eq:zipf} gives
\begin{eqnarray}\label{eq:u2k}
\mu_2(L_0) =\max_{i}\mu_2(L_0^{(i)})\approx\exp(\frac{c_2}{\log(1+\frac{r_0}{k})}),
\end{eqnarray}
where $k$ is the cluster number. Hence, approximately, the second coherence parameter $\mu_2(L_0)$ will increase with the cluster number underlying $L_0$.
\section{Proof of Theorem 2}\label{app:proof:noisy}
\begin{proof} Let $(Z^*,S^*)$ be an optimal solution to \eqref{eq:lrr:noisy}. Denote $N_L=AZ^*-L_0$, $N_S=S^*-S_0$ and $E=N_L+N_S$. Then we have
\begin{eqnarray*}
\|E\|_F &=& \| (X - L_0 - S_0) - (X-AZ^*-S^*)\|_F \\
&\leq& \| (X - L_0 - S_0)\|_F + \|(X-AZ^*-S^*)\|_F\\
&\leq& 2\varepsilon.
\end{eqnarray*}
Provided that $|\Omega|<(0.35-\epsilon)mn$, the proof process of Lemma~\ref{app:lem:f} shows that
\begin{eqnarray*}
\|F\|_{\infty}<0.5.
\end{eqnarray*}
By the optimality of $(Z^*,S^*)$,
\begin{eqnarray*}
\|A^+L_0\|_*+\lambda\|S_0\|_1&\geq&\|Z^*\|_*+\lambda\|S^*\|_1\\
&\geq&\|A^+L_0\|_*+\lambda\|S_0\|_1 + \langle{}UV^T,Z^*-A^+L_0\rangle+\lambda\langle{}sign(S_0)+H,N_S\rangle\\
&=&\|A^+L_0\|_*+\lambda\|S_0\|_1 + \lambda\langle{}sign(S_0)+F,N_L\rangle+\lambda\langle{}sign(S_0)+H,N_S\rangle,
\end{eqnarray*}
which leads to
\begin{eqnarray*}
0&\geq&\langle{}sign(S_0)+F,N_L\rangle+\langle{}sign(S_0)+H,N_S\rangle \\
&=& \langle{}sign(S_0)+F,N_L\rangle+\langle{}sign(S_0)+H,E-N_L\rangle\\
&=& \langle{}F-H,N_L\rangle+\langle{}sign(S_0)+H,E\rangle\\
&\geq&0.5\|\mathcal{P}_{\Omega^\bot}(N_L)\|_1 - \|E\|_1.
\end{eqnarray*}
Hence,
\begin{eqnarray*}
\|\mathcal{P}_{\Omega^\bot}(N_L)\|_F&\leq&\|\mathcal{P}_{\Omega^\bot}(N_L)\|_1\leq2\|E\|_1\\
&\leq&2\sqrt{mn}\|E\|_F\leq4\sqrt{mn}\varepsilon.
\end{eqnarray*}
By $N_L=AZ^*-L_0\in\mathcal{P}_{U_A}$,
\begin{eqnarray*}
N_L =  \mathcal{P}\mathcal{P}_{U_A}\mathcal{P}_{\Omega^\bot}(N_L),
\end{eqnarray*}
where $\mathcal{P}=\mathcal{I}+\sum_{i=1}^{\infty}(\mathcal{P}_{U_A}\mathcal{P}_{\Omega}\mathcal{P}_{U_A})^i$. By $\|\mathcal{P}\|\leq2$,
\begin{eqnarray*}
\|N_L\|_F&\leq&\|\mathcal{P}\|\|\mathcal{P}_{U_A}\mathcal{P}_{\Omega^\bot}(N_L)\|_F\leq\|\mathcal{P}\|\|\mathcal{P}_{\Omega^\bot}(N_L)\|_F\\
&\leq&8\sqrt{mn}\varepsilon.
\end{eqnarray*}
\end{proof}
\section{Optimization Procedure}
\begin{algorithm}[tb]
   \caption{Solving Problem \eqref{eq:lrr} by Exact ALM}\label{alg:alm:lrr}
\begin{algorithmic}
   \STATE {\bfseries Input:} data matrix $X$, dictionary matrix $A$, parameter $\lambda$.
   \STATE {\bfseries Initialization:} $Z=J=0,S=0,Y=0,W=0,\theta=0.1,\tau=5$.
   \WHILE{not converged}
   \STATE \textbf{1.} Alternating minimization:
   \WHILE{not converged}
   \STATE \textbf{1.1.} fix the others and update $J$ by $$J=\arg\min\frac{1}{\theta}||J||_*+\frac{1}{2}||J-(Z+W/\theta)||_F^2.$$
   \STATE \textbf{1.2.} fix the others and update $Z$ by
   $$Z=(\Id+A^TA)^{-1}(A^T(X-S)+J+(A^TY-W)/\theta).$$
   \STATE \textbf{1.3.} fix the others and update $S$ by $$S = \arg\min\frac{\lambda}{\theta}\|S\|_{1}+\frac{1}{2}||S-(X-AZ+Y/\theta)||_F^2.$$
  \ENDWHILE
   \STATE \textbf{2.} update the Lagrange multipliers and the parameter $\theta$
   \begin{eqnarray*}
   Y &=& Y + \theta(X-AZ-S),\\
  W &=& W + \theta(Z-J),\\
   \theta&=&\theta\tau.
 \end{eqnarray*}
   \ENDWHILE
\end{algorithmic}
\end{algorithm}
In this work, we use the exact ALM method to solve the optimization problem \eqref{eq:lrr}. We first convert \eqref{eq:lrr} to the following equivalent problem:
\begin{eqnarray*}
\min_{Z,S,J} \norm{J}_*+\lambda{\norm{S}_{1}},\textrm{ s.t. } X = AZ+S, Z=J.
\end{eqnarray*}
This problem can be solved by the ALM method, which minimizes the following augmented Lagrange function:
\begin{eqnarray*}
\norm{J}_*+\lambda{\norm{S}_{1}} + \langle{}Y,X-AZ-S\rangle+\langle{}W,Z-J\rangle+\frac{\theta}{2}(\norm{X-AZ-S}_F^2+\norm{Z-J}_F^2)
\end{eqnarray*}
with respect to $J$, $Z$ and $S$, respectively, by fixing the other variables, and then updating the Lagrange multipliers $Y$ and $W$. Algorithm \ref{alg:alm:lrr} summarizes the whole procedure of the optimization procedure.

\newpage\clearpage

\end{document}